\documentclass[11pt,numbers]{article}
\pdfoutput=1
\usepackage[margin=1in]{geometry}
\usepackage{authblk}
\usepackage{fontenc}
\usepackage{amsmath}
\usepackage{amsfonts,amssymb,amsthm, bbm}
\usepackage{hyperref}
\usepackage[capitalize]{cleveref}
\usepackage{mathtools}
\usepackage{xcolor}
\usepackage{tikz}

\hypersetup{
    colorlinks,
    linkcolor={red!50!black},
    citecolor={blue!50!black},
    urlcolor={blue!80!black}
}
\usepackage{graphicx}   
\usepackage{subfigure}  
\usepackage{amsbsy} 
\usepackage[bold]{hhtensor}
\usepackage{natbib}
\usepackage{times}
\usepackage{multirow}
% ----- new thm environments ------
\newtheorem{theorem}{Theorem}

\newtheorem{lemma}[theorem]{Lemma}
\newtheorem{proposition}[theorem]{Proposition}
\newtheorem{definition}[theorem]{Definition}
%\newtheorem{observation}[theorem]{Observation}
%\newtheorem{problem}[theorem]{Problem}

%%%================================================
%%%==============  new commands  =======================
%%%================================================
%------- smaller space in,e.g., $\exp\left(x\right)$
% \let\originalleft\left
% \let\originalright\right
% \renewcommand{\left}{\mathopen{}\mathclose\bgroup\originalleft}
% \renewcommand{\right}{\aftergroup\egroup\originalright}
\newcommand{\myleft}{\mathopen{}\mathclose\bgroup\left}
\newcommand{\myright}{\aftergroup\egroup\right}
%%% ------ rm ---------

\renewcommand{\i}{\ensuremath\mathrm{i}}

\DeclareMathOperator{\Tr}{Tr}

%\DeclareMathOperator{\End}{L}

%hermiticity and trace preserving

%\DeclareMathOperator{\HP}{HP}

% Adjoint rep.: Ad_U[X] = U X U^\dagger

%unitary group U(n)
% Clifford group Cl(n) \subset U(n)
%orthogonal group U(n)
%special unitary group SU(n)

%%% -------- Landau symbols ---------

% %%% ------- complexity -----------
% \makeatletter
% %from complexity.sty
% \newcommand\complexity@possiblymakesmaller[1]{#1} % default: do nothing
% \newcommand\complexity@fontcommand{\mathsf} % default: do noting
% \newcommand{\ComplexityFont}[1]{%
% {\ensuremath{\complexity@possiblymakesmaller{\complexity@fontcommand{#1}}}}%extra {} makes everyone happy.
% }
% \makeatother
% \newcommand{\NP}{\ComplexityFont{NP}}
% \newcommand{\sharpP}{\#\ComplexityFont{P}}

%%% ------ mathbb --------

% for \RR or \CC

%%% ------ mathcal ---------

%%% ------ other ----------

%\renewcommand{\vec}[1]{\mathbf{#1}}
 %descent cone
%symbol for conjugate transpose
%linear operators
%linear maps on operators

% \newcommand{\DM}{\operatorname{\mc{D}}} %density operators

%%% ------ norms, inner product ----------
 %norm with variable height
 %norm with normalsize height
 %norm with big height
 %norm with Big height
 %spectral norm  =  (2->2)-norm

 %trace norm

 %trace norm

 %diamond norm

 %Frobenius norm

%\newcommand{\nnorm}[1]{\norm{#1}_1} %nuclear norm
%\newcommand{\nnormn}[1]{\normn{#1}_1}
%\newcommand{\nnormb}[1]{\normb{#1}_1}
 %diamond norm

 % \jnorm{J(M)}=dim(V)*\dnorm{M}

 % ||| . |||
 %\ell_2 norm

 %\ell_q norm

 % 2 norm

%%% ---- Kets -----
\newcommand{\ket}[1]{\left.\left|{#1}\right.\right\rangle}

\newcommand{\bra}[1]{\left.\left\langle{#1}\right.\right|}

\newcommand{\ketbra}[2]{\ket{#1} \!\! \bra{#2}}

%%% ------ specific to this project ----------

 %expectation value
 %measurement map
 %error vector
 %descent cone

 % Command for tensor power notation

% Commands allowing to switch-on AGF abbreviation for average gate fidelity
%\newcommand{\AGF}{average gate fidelity}

%\newcommand{\AGFs}{average gate fidelities}

%Commands for consistent notation of reconstruction errors

\newcounter{example}[section]

%-------------Emilio's new commands-----

% \newcommand{\set}[1]{\left\{ #1 \right\}}

%\newcommand{\EE}{\mathbb{E}}
%%% -----------------------------------------
%%% ------ new theorem environments -----
%%% -----------------------------------------

\DeclarePairedDelimiter{\ceil}{\lceil}{\rceil}

\allowdisplaybreaks[4]

% ---some Anshu-style notations & other (added by Yaroslav)---

\newcommand{\bits}{\{0,1\}}

\renewcommand{\ketbra}[1]{\ket{#1}\bra{#1}}
\newcommand{\fm}[1]{#1^{\mathrm{(f)}}}
\newcommand{\qb}[1]{#1^{\mathrm{(q)}}}

\newcommand{\qf}[1]{#1^{\mathrm{(qf)}}}

\newcommand{\bs}{\backslash}

\bibliographystyle{alphaurl}

\begin{document}
\title{Fermionic Hamiltonians without trivial low-energy states}

\author[,1,2,3]{Yaroslav Herasymenko\thanks{Corresponding author: yaroslav@cwi.nl}}

\author[4]{Anurag Anshu}

\author[1,2]{Barbara~M.~Terhal}

\author[3]{Jonas Helsen}

\affil[1]{QuTech, TU Delft, P.O. Box 5046, 2600 GA Delft, The Netherlands}
\affil[2]{Delft Institute of Applied Mathematics, TU Delft, 2628 CD Delft, The Netherlands}
\affil[3]{QuSoft and CWI, Science Park 123, 1098 XG Amsterdam, The Netherlands}
\affil[4]{School of Engineering and Applied Sciences, Harvard University, 150 Western Ave., Allston, MA 02134, US}
\maketitle
\begin{abstract}
We construct local fermionic Hamiltonians with no low-energy trivial states (NLTS), providing a fermionic counterpart to the NLTS theorem. Distinctly from the qubit case, we define trivial states via finite-depth \textit{fermionic} quantum circuits. We furthermore allow free access to Gaussian fermionic operations, provided they involve at most $O(n)$ ancillary fermions. The desired fermionic Hamiltonian can be constructed using any qubit Hamiltonian which itself has the NLTS property via well-spread distributions over bitstrings, such as the construction in~\cite{anshu2022nlts}. 
We define a fermionic analogue of the class quantum PCP and discuss its relation with the qubit version.
\end{abstract}

\section{Introduction}

One of the main problems in quantum complexity is establishing the hardness of quantum computational problems such as the estimation of the ground state energy of a many-body Hamiltonian. If high accuracy is required, --scaling inversely polynomial with the problem size $n$--, this problem is known to be QMA-complete \cite{KSV}. It has been an open question 
whether the estimation of the ground state energy {\em density} of a Hamiltonian with constant error, would also be QMA-hard \cite{AAV}. Such a result would mirror the classical PCP theorem, which states that there are classical SAT problems for which it is NP-complete to decide whether all clauses can be satisfied or a constant fraction of the clauses can never be satisfied.\\

Motivated by this quantum PCP problem, it was conjectured in \cite{freedman2013quantum} that there exist local sparse qubit Hamiltonians such that any state with sufficiently small energy density cannot be prepared by a constant-depth circuit, hence suggesting that low-energy states have some nontrivial complexity. This ``no low-energy trivial state'' (NLTS) conjecture was recently proven in \cite{anshu2022nlts}. They demonstrate that a family of Hamiltonians based on recently established good quantum LDPC codes \cite{leverrier2022quantum} has the desired property. While the low energy states of this family of Hamiltonians are non-trivial, their ground state can be efficiently prepared by a Clifford circuit. This means that these Hamiltonians fall short of being genuinely `complex'. Hence it is natural to extend the definition of ``non-trivial" beyond constant-depth circuits to climb up the complexity ladder, see \cite{GL22,WFC:guide, coble2023local,anschuetz2023combinatorial} . In particular, ~\cite{coble2023local} have shown that an adaptation of the construction in \cite{anshu2022nlts} can be used to construct Hamiltonians without either stabilizer low-energy states or constant-depth low-energy states.\\

Here we similarly extend the construction in \cite{anshu2022nlts}, but now in the direction of fermionic Hamiltonians.
Fermionic Hamiltonians describe the majority of quantum systems of practical interest, namely interacting electrons in materials and chemical compounds. These systems are among the prime targets for quantum simulation using both classical and quantum computers. Fermionic quantum complexity is interesting when it differs from qubit complexity and when qubit arguments are not directly applicable. In addition, while most attention in the quantum complexity literature is devoted to qubit systems, fermions have come more into focus in recent years, see e.g.~\cite{BG:impurity,BGKT:manybody,oszmaniec+:FS,hastings21syk,HSHT}. \\

 In the present manuscript, we answer the question posed in \cite{hastings21syk} in the affirmative: do there exist fermionic Hamiltonians without low-energy trivial states? Our results directly build on \cite{anshu2022nlts}. We give natural notions of triviality through the class of fermionic Gaussian states, which are classically simulatable
 Besides our main result, the fermionic NLTS Theorem (Theorem \ref{thm:fNLTS}), we analyze the definition of the quantum PCP class from a fermionic perspective, presenting a fermionic version (Section \ref{sec:PCP}). The paper is otherwise organized as follows. In Section \ref{sec:circuits} we provide some useful definitions, in particular we give two different natural notions of trivial fermionic circuits.  In Section \ref{sec:well-spread} we introduce the key tools from \cite{anshu2022nlts} and in Section \ref{sec:fermionic_nlts} we prove our main Theorem. We end the paper with a Discussion (Section \ref{sec:discuss}) on sparse fermion-to-qubit mappings and what lies beyond.

\section{Preliminaries and Definitions}
\label{sec:circuits}

A system of $n$ fermionic modes can be described by $2n$ Majorana fermion operators $c_i$ ($i\in [2n]$) which obey $c_i^2=\mathbb{I}$, $\forall \i \neq j\; c_i c_j=-c_i c_j$, $c_i=c_i^{\dagger}$, ${\rm Tr}(c_i)=0$.
Any fermionic state on this system is given by a Hermitian positive semidefinite $\rho\geq 0$ with ${\rm Tr}(\rho)=1$ which can be expressed as an {\em even} polynomial in the Majorana fermion operators.

We define Hermitian operators $C_K$ as ordered products of the operators $c_k$:
    \begin{align}
        C_{K}=i^{|K|(|K|-1)/2} c_{k_1} c_{k_2} ..c_{k_{|K|}},
    \end{align}
with $K=(k_1< k_2 < \ldots < k_{|K|})$. 
The class of Gaussian fermionic states is a subclass of fermionic states which are efficiently describable. Gaussian circuits preserve this class of states, and hence are efficiently-classically simulatable.  For more background on Gaussian states and Gaussian circuits, --also called fermionic linear optics--, we refer the reader to e.g. \cite{bravyi00ferm,Terhal_2002,bravyi2004lagrangian,hastings21syk,HSHT}. 

Consider preparing a fermionic state $\rho$ on $n$ modes using $m$ ancillary modes (with Majorana operators $c_k$, $k\in[2(n+m)]\bs[2n]$). Starting from a \textit{pure} initial Gaussian state $\sigma_G$, one may use a fermionic circuit $U$ which is generally decomposable into gates which use quartic or quadratic interactions between O(1) fermionic operators \cite{bravyi00ferm}, i.e.
\begin{equation}
\rho=\Tr_{\left[2(n+m)\right]\bs\left[2n\right]}\left(U
\sigma_G U^\dag\right).
\label{eq:fermi_state_prep}
\end{equation}
We now introduce two slightly different notions of nontrivial fermionic circuit `depth'. Our definition will make use of the notion of Gaussian circuits, namely circuits consisting of unitaries of the form $\exp\left(- \omega \,c_{k_1} c_{k_2}\right)$. We note that any Gaussian unitary on $n$ fermionic modes can be written as a $\Theta(n)$-deep circuit \cite{Jiang_2018} of such unitaries. The definition is as follows:

\begin{definition}
    \label{def:fermi_complexity}
    We say that a fermionic circuit $U$ has depth $T$, if it is given by
    \begin{equation}
        U=W_{T} W_{T-1}..W_1,
        \label{eq:fermion_circuit}
    \end{equation}
    with each $W_t$ of the form
    \begin{align}
    W_t=\prod_{K\colon|K|\in\{2,4\}} \exp\left(i \omega^{(t)}_{K}\,C_K\right)
    \label{eq:nongaussian_operation}
    \end{align}
    where the product runs over \emph{non-overlapping} sets $K\subset [2n]$. Furthermore we say that $U$ is a depth $T$ circuit with \emph{free access to Gaussian operations}, if it is given by
    \begin{align}
        U=G_{T}W_TG_{T-1}..W_2G_1W_1,
        \label{eq:fermion_circuit_free_gauss}
    \end{align}
    where $G_t$ are arbitrary Gaussian circuits.
\end{definition}

Note we take a fixed choice of operators $c_s$ in Eq.\,\eqref{eq:nongaussian_operation}. Allowing the use of an arbitrary basis of fermionic operators $\{c_k\}$ in each layer of $W_t$ would have allowed more expressive unitaries $U$. However, we don't consider this model separately, since it is subsumed by free access to Gaussian operations in Eq.\,\eqref{eq:fermion_circuit_free_gauss}.

Here is another useful (standard) definition:

\begin{definition}[Local sparse $n$-qubit (resp. $n$-fermion) Hamiltonian]
    $H=\sum_{i=1}^m H_i$ is a local sparse $n$-qubit (resp. $n$-fermion) Hamiltonian when the maximum number of Pauli operators (resp. Majorana operators) in each term $H_i$ is $O(1)$ (local) and the maximum number of terms involving any qubit (resp. Majorana operator) is $O(1)$. The smallest eigenvalue of $H$ is denoted as $\lambda(H)$.
\end{definition}

Throughout this manuscript, we will be considering Hamiltonians which are sparse, local, positive-semidefinite and have terms bounded by one in operator norm unless explicitly said otherwise.

\section{NLTS Hamiltonians and well-spread distributions}
\label{sec:well-spread}

It was proven in \cite{anshu2022nlts} that there exist qubit Hamiltonians with the following NLTS property
\begin{definition}
    A local sparse $n$-qubit Hamiltonian $H\geq 0$ has the NLTS property with parameter $\epsilon$, if any family of $n$-qubit states $\rho_n$ with energy $\mathrm{Tr} ({\rho}_n {H})<\epsilon n$ requires a quantum circuit, which uses an arbitrary number of ancilla qubits, of depth at least $T=\Omega(\log n)$. 
\end{definition}

A way to obtain NLTS Hamiltonians passes through the notion of well-spread quantum states, which are states that, when measured in a ``trivial" basis, have certain statistical properties. We now define well-spreadness both for qubits and fermions; the fermionic definition will play a key role in our result.

\begin{definition}
\label{def:loc_measurement}
   Consider an $n$-qubit POVM $\pi_{R}(s)$, defined as 
    \begin{equation}
    \pi_{R}(s)=R^\dag\prod^n_{j=1}\frac{1+(-1)^{s_j}Z_j}{2}R,
    \end{equation}
    where $s$ is the outcome bitstring $s=(s_1,..s_n)\in \bits^n$ and $R$ is a tensor product of single qubit unitaries. 
    Given an $n$-qubit state $\rho$, define a probability distribution over $s\in \bits^n$,
    \begin{equation}
    p_{R,\rho}(s)=\Tr(\pi_{R}(s)\rho).
    \end{equation}
\end{definition}

\begin{definition}
    \label{def:gaus_measurement}
    Consider a POVM with $2n$ Majorana fermion operators 
    \begin{equation}
    \pi_{G}(s)=G^\dag \prod^n_{j=1}\frac{\mathbb{I}+(-1)^{s_j}c_{2j-1}c_{2j}}{2}G,
    \end{equation}
    where $G$ is a Gaussian unitary and $s$ is the outcome bitstring. A state $\rho$ on $n$ fermionic modes yields a probability distribution over $s\in \bits^n$
    \begin{equation}
    p_{G,\rho}(s)=\Tr(\pi_{G}(s)\rho).
    \end{equation}

\end{definition}
    
\begin{definition}
    \label{def:well-spread}
    A qubit (resp. fermionic) state $\rho$ is said to be $(\mu, L)$\emph{-spread} if exists a unitary $R$ (respectively, $G$) as defined above, and two sets of bitstrings $S_1$ and $S_2\subset \{0,1\}^n$ such that $p(s)\equiv p_{R,\rho}(s)$ (respectively, $p_{G,\rho}(s)$) obeys
    \begin{equation}
    \sum_{s\in S_1} p(s)\geq \mu,~~\sum_{s\in S_2} p(s)\geq \mu,~~\mathrm{and}~~\min_{s_{1,2}\in S_{1,2}}|s_1-s_2|=L.
    \end{equation}
    with Hamming distance $|.|$. 
    An $n$-qubit ($n$-fermion) Hamiltonian $H$ is referred to as having well-spread low-energy states, if there exists a constant $\epsilon$ such that any $\rho$ with energy $\mathrm{tr}(\rho H)<\epsilon n$ is $(\mu, L)$\emph{-spread} for $\mu=\Omega(1)$ and $L=\Omega(n)$.
\end{definition}

Ref.\,\citep{anshu2022nlts} gives a quantum LDPC (CSS) code construction of an $n$-qubit Hamiltonian $H$ with well-spread low-energy states. This property implies that $H$ is NLTS (Fact 4 in \cite{anshu2022nlts}).

\section{Fermionic NLTS Hamiltonians}
\label{sec:fermionic_nlts}

We show that qubit Hamiltonians with well-spread low energy states (by Definition \ref{def:well-spread}) can be used to construct fermionic local Hamiltonians without trivial low-energy states. 
Terminology related to the fermionic circuits was introduced in Definition~\ref{def:fermi_complexity}.

\begin{theorem}
\label{thm:fNLTS}
For even $n$, consider an $n$-qubit $\qb{H}$ with well-spread states below energy $\epsilon n$. Construct a $3n/2$-fermion Hamiltonian $H$ with Majorana operators $\{c_{x,j},~c_{y,j},~c_{z,j}\}_{j\in[n]}$, by replacing the Pauli operators in $\qb{H}$ using
\begin{align}
        X_j\mapsto i c_{y,j} c_{z,j},~~
        Y_j\mapsto i c_{x,j} c_{z,j},~~
        Z_j& \mapsto i c_{x,j} c_{y,j}.
        \label{eq:qam_thm}
\end{align}
The resulting Hamiltonian $H$ has two fermionic NLTS properties.
In particular, for an arbitrary $3n/2$-fermion state ${\rho}$ such that $\mathrm{Tr} ({\rho} {H})<\epsilon n$, the following holds
\begin{enumerate}
    \item 
    Using arbitrary Gaussian initialization and an arbitrary number of ancillas, any fermionic circuit that prepares $\rho$ has depth $T = \Omega(\log(n))$
    \item Using a fermionic circuit with free access to Gaussian operations and at most $m=O(n)$ ancillary fermionic modes, any circuit that prepares $\rho$ has depth $T=\Omega(\log n)$.
\end{enumerate}
\end{theorem}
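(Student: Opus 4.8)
The plan is to argue by contradiction, mirroring the proof that well-spreadness implies NLTS in \cite{anshu2022nlts} (their Fact~4): assume $\rho$ is a $3n/2$-fermion state with $\Tr(\rho H)<\epsilon n$ prepared by a depth-$T$ fermionic circuit of one of the two types, and extract $T=\Omega(\log n)$ from the clash of two facts --- (A) every such low-energy $\rho$ is well-spread in the fermionic (Gaussian-measurement) sense of Definition~\ref{def:well-spread}, with $\mu=\Omega(1)$ and $L=\Omega(n)$; and (B) a shallow fermionic circuit cannot output a fermionically well-spread state.

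For (A) I would first isolate the algebraic content of the substitution~\eqref{eq:qam_thm}. The operators $ic_{y,j}c_{z,j}$, $ic_{x,j}c_{z,j}$, $ic_{x,j}c_{y,j}$ generate a copy of the $n$-qubit Pauli algebra inside the algebra of the $3n/2$ modes, whose commutant is generated by the $n$ mutually anticommuting involutions proportional to $c_{x,j}c_{y,j}c_{z,j}$; hence, for even $n$, one gets a factorization $\mathcal L(\mathbb C^{2^{3n/2}})\cong\mathcal L(\mathbb C^{2^{n}})\otimes\mathcal L(\mathbb C^{2^{n/2}})$ under which $H=\qb H\otimes\mathbb I$. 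Writing $\qb\rho$ for the reduction of $\rho$ to the first (``logical'') factor, one has $\Tr(\qb\rho\,\qb H)=\Tr(\rho H)<\epsilon n$, so by hypothesis $\qb\rho$ is $(\mu,L)$-spread with $\mu=\Omega(1)$, $L=\Omega(n)$: there exist a product unitary $R=\bigotimes_j R_j$ and sets $S_1,S_2$ with $p_{R,\qb\rho}(S_i)\ge\mu$ and $\min_{s_1\in S_1,\,s_2\in S_2}|s_1-s_2|=L$. The key point is that each single-qubit $R_j$ --- generated by $X_j,Y_j,Z_j$, which map to quadratic Majorana operators on $\{c_{x,j},c_{y,j},c_{z,j}\}$ --- becomes a Gaussian unitary on those three Majoranas, so $R\mapsto G_R:=\bigotimes_j G_{R_j}$ is a genuine Gaussian unitary on the $3n/2$ modes. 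Choosing the mode pairing so that $(c_{x,j},c_{y,j})$ is a mode for each $j\in[n]$ (the remaining $n/2$ modes pairing up the $c_{z,j}$), the image of $\pi_R(s)$ under the substitution equals the fermionic Gaussian POVM $\pi_{G_R}$ with its $c_z$-mode outcomes marginalized out, and it has the same expectation in $\rho$ as $\pi_R(s)$ has in $\qb\rho$; hence $p_{G_R,\rho}$ marginalized to the logical bits coincides with $p_{R,\qb\rho}$. Taking $S_1,S_2$ as cylinder sets (unconstrained on the gauge bits) then certifies that $\rho$ is $(\mu,L)$-spread fermionically, so $H$ has well-spread low-energy states in the sense of Definition~\ref{def:well-spread}.

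For (B) I would first normalize the circuit and measurement. Writing the outcome distribution as the expectation of $\prod_j\frac{\mathbb I+(-1)^{s_j}ic_{2j-1}c_{2j}}{2}$ in the state $(GU)\sigma_G(GU)^\dagger$ and absorbing the measurement's Gaussian $G$ and the Gaussian initialization into outer Gaussian layers, both circuit models reduce to the common shape ``pure Gaussian input, then $T$ non-Gaussian layers, each a tensor product over non-overlapping sets of at most four Majoranas of elementary gates --- in model~2 interleaved with arbitrary Gaussian layers and with total mode count $O(n)$, in model~1 with no intervening Gaussians and unbounded ancillas --- then a measurement in the fixed occupation basis''. One then argues by induction on $T$ that any two sets $S_1,S_2$ with $p(S_i)\ge\mu=\Omega(1)$ have Hamming distance at most $\exp(O(T))\,O(\sqrt n)$, so that $(\mu,\Omega(n))$-spreadness forces $\exp(O(T))=\Omega(\sqrt n)$, i.e. $T=\Omega(\log n)$. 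The base case $T=0$ is the statement that a pure (or mixed) fermionic Gaussian state measured in a Gaussian basis yields a distribution that is not $(\Omega(1),\Omega(n))$-spread; this follows from concentration of Lipschitz functions for the determinantal --- or, after a standard doubling, Pfaffian --- point processes such distributions define, which forces any two constant-mass sets to lie within $O(\sqrt n)$ of one another. For the inductive step one needs a Lieb--Robinson-type estimate: prepending a single non-Gaussian $\le 4$-Majorana layer (and, in model~2, a following Gaussian layer) enlarges the ``spread radius'' of the distribution by at most a constant factor. Combining (A) and (B) yields both parts of the theorem, the only difference being which normalized circuit shape is fed into (B).

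The crux is the inductive step of (B). Because Gaussian operations are non-local, the output distribution has no ordinary mode-support light cone and propagation must be tracked relative to the Gaussian ``background''. In part~1 this is mild: the only Gaussian rotations are the outer ones, so after absorbing them one works in a single rotated Majorana frame in which the $T$ non-Gaussian layers do have a light cone of size $4^T$, and the base-case concentration of the Gaussian input propagates forward through this light cone. In part~2 the free Gaussian layers genuinely reshuffle the frame at every step, and here the restriction to $m=O(n)$ ancillary modes is essential: it bounds the number of non-Gaussian $\le 4$-Majorana gates per layer by $O(n)$, hence the total ``non-Gaussian weight'' injected by the circuit, which --- together with the fact that Gaussian conjugation sends determinantal structure to determinantal structure --- is what I would leverage to bound how far apart two constant-mass clusters of the final distribution can be.
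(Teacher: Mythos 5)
Your step (A) — mapping $\rho$ back through the qubit-assimilation isomorphism, noting that single-qubit rotations $R_j$ become Gaussian unitaries on $\{c_{x,j},c_{y,j},c_{z,j}\}$, and certifying fermionic $(\mu,L)$-spreadness with cylinder sets over the $n/2$ gauge modes — is essentially the paper's argument and is fine. The genuine gap is in (B). You reduce everything to an induction on depth whose inductive step (``one non-Gaussian $\le 4$-Majorana layer, plus in model~2 an arbitrary Gaussian layer, enlarges the spread radius by at most a constant factor'') is exactly the hard part, and you offer no mechanism for it: neither ``total non-Gaussian weight'' nor the remark that Gaussian conjugation preserves determinantal structure controls how the clustering of the outcome distribution evolves under conjugation by non-Gaussian layers interleaved with \emph{arbitrary} Gaussians. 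Your stated role for $m=O(n)$ is also off target — non-overlapping gates are at most $O(n+m)$ per layer regardless, so the ancilla bound does not enter by limiting gates per layer. Even your base case needs real work: occupation statistics of a general (pairing) Gaussian state form a Pfaffian, not determinantal, point process, and Lipschitz concentration there is not an off-the-shelf fact. Finally, for part~1 ``absorbing the outer Gaussians and propagating concentration through a $4^T$ light cone'' is asserted, not proved.

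For comparison, the paper's route (Lemma~\ref{lem:fdepth}) avoids any induction on depth: it writes $U\sigma_G U^\dagger$ as the unique ground state of $UQU^\dagger$ with $Q=\frac{1}{l+m}\sum_{j}\frac{1}{2}(\mathbb{I}-ic_{2j-1}c_{2j})$, observes that Gaussian conjugation preserves the Majorana degree of a monomial while each non-Gaussian layer multiplies it by at most $3$ (so $UQU^\dagger$ is $2\cdot 3^T$-local), and then uses a degree-$f$ polynomial approximation $P(UQU^\dagger)$ of the ground-state projector with error $\exp(-f^2/100(l+m))$. Choosing $f=L/(4\cdot 3^T)$ makes $P$ $L/2$-local, so its off-diagonal blocks $\pi_G(s_1)P\,\pi_G(s_2)$ vanish when $|s_1-s_2|\ge L$, whereas $(\mu,L)$-spreadness forces the corresponding blocks of the true projector to have norm at least $\mu$; comparing gives $\exp(-L^2/(1600(l+m)3^{2T}))\ge\mu$, and this is precisely where $l+m=O(n)$ enters — through the approximation error, not gate counting. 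Part~1 is then handled by truncating $U$ to the backward light cone of the $3n$ output Majoranas (support at most $3n\cdot 4^T$), replacing $\sigma_G$ by its marginal there and purifying with $O(n\cdot 4^T)$ modes, and invoking Lemma~\ref{lem:fdepth} again. To make your induction work you would in effect have to prove a statement of this strength anyway, so as it stands the proposal does not constitute a proof.
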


Before proceeding to the proof, we would like to briefly discuss the above result. The restriction on the access to ancillary modes given access to Gaussian operations, which we see in Theorem \ref{thm:fNLTS}.2, may be of fundamental origin. On the one hand, ancillary non-Gaussian states in combination with Gaussian operations allow the injection of non-Gaussian gates \cite{bravyi:uni}. On the other hand, this procedure also requires a capacity for adaptive measurements, which is not included in our scenario. We leave as an open question whether the particular upper bound $m=O(n)$ is optimal, or if it can be improved using other proof techniques.\\

Our constructed Hamiltonian $H$ has the same spectrum as $\qb{H}$, with an additional $2^{n/2}$-fold degeneracy. As will be clear from the proof, the degeneracy does not affect the well-spreadness property. The key point is that the mapping we use preserves the locality of $H$ (unlike, say, a direct mapping of $n$ qubits onto $n$ fermions via the inverse Jordan-Wigner transformation). We believe that a similar result can be achieved with the local embedding of $n$ qubits into $4n$ Majorana operators, as introduced by \cite{kitaev06}, or a mapping used in \cite{Liu_2007}.

\begin{proof}  

    The Hamiltonian $H$ constructed using Eqs.\,\ref{eq:qam_thm} can be understood as an image of $\qb{H}\otimes \fm{\mathbb{I}}$ under the unitary ``qubit assimilation'' mapping (Lemma \ref{lem:qam} in Appendix \ref{sec:qam}). Here $\fm{\mathbb{I}}$ acts on auxiliary $n/2$ fermionic modes. 
    
    We now show that ${\rho}$ is $(\Omega(1), \Omega(n))$-spread. Then we use this fact to lower-bound the circuit complexity of ${\rho}$.
    
    For the state $\rho$, construct $\qf{\rho}$ on $n$ qubits and $n/2$ fermions by applying inverse qubit assimilation to ${\rho}$ (replacements listed in Eqs.\,\ref{eq:qam_c_inv}). Next, trace out the $n/2$-fermion sector from $\qf{\rho}$ and thus obtain an $n$-qubit state $\qb{\rho}$. Observe
    \begin{align}
        \mathrm{Tr} (\qb{\rho} \qb{H})=\mathrm{Tr} (\qf{\rho} \qb{H}\otimes \fm{\mathbb{I}})=\mathrm{Tr} ({\rho} {H})<\epsilon n.
    \end{align}
    By definition of $\qb{H}$, this implies that $\qb{\rho}$ is $(\Omega(1), \Omega(n))$-spread. By Definition~\ref{def:well-spread}, there exists a product of single-qubit rotations $R$, a POVM $\{\pi_R(s)\}_{s\in \{0,1\}^n}$, and two sets of bitstrings $S_{1,2}\subset\bits^n$ such that $p_R(s)=\Tr (\qb{\rho} \pi_R(s))$ obeys
    \begin{align}
    \sum_{s\in S_{1,2}} p_R(s)=\Omega(1), \min_{s_{1,2}\in S_{1,2}}|s_1-s_2|=\Omega(n).
    \label{eq:qubit_side_well_spread}
    \end{align}
    Considering a general $R$ rotating $Z_j$ to $X_j \sin \theta_j \cos\phi_j +\sin\theta_j\sin\phi_j Y_j+\cos\theta_j Z_j$ for each qubit, we write it as
    \begin{equation}
    R=\prod^n_{j=1} \exp \left(\frac{i\theta_j}{2} (\sin \phi _j X_j - \cos \phi_j Y_j)\right).
    \end{equation} 
    By qubit assimilation (Lemma \ref{lem:qam}), $X_j,~Y_j,$ and $Z_j$ are mapped respectively onto $ic_{j,z}c_{j,y}$, $~ic_{j,x}c_{j,z},$ and $ic_{j,x}c_{j,y}$ for $j\in [n]$. Therefore $R$ is mapped onto
    \begin{align}
        G = \prod_j \exp \left(\frac{\theta_j}{2} \left(-c_{j,z}c_{j,y}\sin \phi_j  + c_{j,x}c_{j,z}\cos \phi_j \right) \right).
    \end{align}
    Relabeling $c_{j,y}\rightarrow c_{2j-1}$, $c_{j,x}\rightarrow c_{2j}$ and $c_{j,z}\rightarrow c_{2n+j}$, define a POVM
    \begin{align}
        \pi_G(\tilde{s})= G^\dag \left(\prod^{3n/2}_{j=1}\frac{\mathbb{I}+(-1)^{\tilde{s}_j}ic_{2j-1}c_{2j}}{2}\right) G,
    \end{align}
    where $\tilde{s}\in \bits^{3n/2}$. We adopted the notation $\pi_G(\tilde{s})$ from Definition \ref{def:gaus_measurement}, since the unitary $G$ is Gaussian.     
    We see that qubit assimilation maps $\pi_R(s)$ onto: 
    \begin{align}
        G^\dag \left(\prod^{n}_{j=1}\frac{\mathbb{I}+(-1)^{s_j}ic_{2j-1}c_{2j}}{2}\right) G
        &= \sum_{\substack{\tilde{s}_j=\pm 1,\\ j\in(n+1,..3n/2)}} G^\dag \left(\prod^{3n/2}_{j=1}\frac{\mathbb{I}+(-1)^{\tilde{s}_j}ic_{2j-1}c_{2j}}{2}\right) G\big|_{\tilde{s}_j=s_j,\,j\in[n]}
\notag\\&=\sum_{\substack{\tilde{s}_j=\pm 1,\\ j\in(n+1,..3n/2)}}\pi_G(\tilde{s})\big|_{\tilde{s}_j=s_j,\,j\in[n]}.\label{eq:qubit_to_fermion_POVM}
    \end{align}
    Note that in Eq.\,\eqref{eq:qubit_to_fermion_POVM} bits $s_j$ for $j\in(1,\ldots,n)$ came from the pre-image $\pi_R(s)$, while $s_j$ for $j\in(n+1,\ldots, 3n/2)$ are the newly introduced dummy variables. Applying the constructed POVM $\pi_G(s)$, we find that the $3n/2$-fermion state $\rho$  is $(\Omega(1), \Omega(n))$ by Definition \ref{def:well-spread}. Indeed, using $S_{1,2}$ from Eq.\,\eqref{eq:qubit_side_well_spread} we can directly construct two sets of bitstrings $\tilde{S}_{1,2}\equiv \{(s_1,..s_n,s'_1,..s'_{n/2})~| ~s\in S_{1,2},~s'\in\bits^{n/2} \}\subset \bits^{3n/2}$, such that $p_G(s)=\Tr (\rho \pi_G(s))$ obeys

    \begin{align}
    \label{eq:fermion_side_well_spread}
    \sum_{\tilde{s}\in \tilde{S}_{1,2}} p_G(\tilde{s})=\sum_{s\in S_{1,2}} p_R(s) =\Omega(1), \min_{\tilde{s}_{1,2}\in \tilde{S}_{1,2}}|\tilde{s}_1-\tilde{s}_2|=\min_{s_{1,2}\in S_{1,2}}|s_1-s_2|=\Omega(n).
    \end{align}

    Consider constructing ${\rho}$ with a depth-$T$ fermion circuit with free access to Gaussian operations, using $m=\mathrm{O}(n)$ ancillary modes. To prove that $T=\Omega(\log n)$ depth is required, we employ $({\Omega}(1), \Omega(n))$-spreadness of $\rho$ and Lemma \ref{lem:fdepth} stated below. This directly yields $T=\Omega(\log n)$. \\

    Finally, consider producing ${\rho}$ in a depth $T$ fermionic circuit $U$ (see Eqs.\,(\ref{eq:fermi_state_prep},\,\ref{eq:fermion_circuit})), starting from an arbitrary Gaussian initialization $\sigma_G$. Since the output state ${\rho}$ is obtained as a marginal of $U\sigma_G U^{\dag}$ on Majorana operators $[3n]$, we are not concerned with the action of $U$ on the entire system of Majorana operators $[3n+2m]$. Without changing ${\rho}$, $U$ can be replaced with $U'$: its backward light cone stemming from $[3n]$.
    This backward light cone will be supported on at most $3n\cdot 4^T$ Majoranas, because each gate in $W_t$ (Eq.~\eqref{eq:nongaussian_operation}) involves at most three Majoranas in addition to each one from the light cone at layer $t+1$. Naively, an arbitrary number $2m$ of ancillary Majoranas can still be non-trivially involved in the initialization state $\sigma_G$. However, one can replace this state with its marginal ${\rho}_G$ on $3n\cdot 4^T$ Majoranas supporting $U'$, without changing ${\rho}$. In turn, being a mixed Gaussian state on $3n\cdot 4^T$ Majoranas, ${\rho}_G$ can be purified back onto $O(n\cdot 4^T)$ fermions.\\
    Therefore, any $3n/2$-fermionic state prepared using a depth-$T$ fermionic circuit, arbitrary Gaussian initialization, and arbitrary $m$ ancillary fermionic modes, can be prepared in the same setting with $3n/2+m$ reduced to $O(n\cdot 4^T)$. Employing a next Lemma \ref{lem:fdepth} and $(\Omega(1), \Omega(n))$-spreadness of $\rho$ once again, we obtain $T=\Omega(\log n)$, finalizing the proof of the Theorem.    
\end{proof}

\begin{lemma}
\label{lem:fdepth}
    Consider a $(\mu, L)$-spread $l$-fermion state $\rho$. Any fermionic circuit using $m$ ancillary fermionic modes and free access to Gaussian operations that produces $\rho$ must be of depth at least 
    \begin{equation}
   T=\frac{1}{2} \log_3 \left(\frac{L^2}{1600 (l+m)\log(1/\mu)}\right).
    \end{equation}
    
\end{lemma}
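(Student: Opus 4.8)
The plan is to prove the contrapositive in quantitative form: if $\rho$ is produced by a depth‑$T$ fermionic circuit with free access to Gaussian operations and $m$ ancillary modes, then for \emph{every} Gaussian unitary $G$ the outcome distribution $p_{G,\rho}$ of Definition~\ref{def:gaus_measurement} obeys a sub‑Gaussian concentration inequality
\[
\Pr_{s\sim p_{G,\rho}}\bigl[\,\lvert f(s)-\EE f\rvert \ge \tau\,\bigr]\ \le\ 2\exp\!\left(-\tau^{2}/\sigma_T^{2}\right)
\]
for every $f\colon\bits^{l}\to\RR$ that is $1$‑Lipschitz with respect to Hamming distance, with variance proxy $\sigma_T=O\!\left(3^{T}\sqrt{l+m}\right)$. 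The Lemma then follows from the standard deduction: were $\rho$ $(\mu,L)$‑spread via such a $G$ and sets $S_1,S_2$ (Definition~\ref{def:well-spread}), apply the inequality to $f(s)=\dist_H(s,S_1)$; since $f\equiv 0$ on $S_1$ the mass $p(S_1)\ge\mu$ forces $\EE f\le\sigma_T\sqrt{\ln(2/\mu)}$, and since $f\ge L$ on $S_2$ the mass $p(S_2)\ge\mu$ then forces $L\le 2\sigma_T\sqrt{\ln(2/\mu)}$. Rearranging $L^{2}\le 4\sigma_T^{2}\ln(2/\mu)$ gives $3^{2T}\gtrsim L^{2}/\bigl((l+m)\log(1/\mu)\bigr)$, i.e.\ a bound of exactly the stated shape, with the numerical constant $1600$ fixed by the absolute constant in $\sigma_T$.

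To set up the concentration estimate I would first normalize the circuit. Absorb the outermost Gaussian layer $G_T$ of Eq.~\eqref{eq:fermion_circuit_free_gauss} together with the measurement's Gaussian $G$ into a single Gaussian unitary; then $p_{G,\rho}$ is the computational‑basis outcome distribution, restricted to the $l$ output modes, of the state $G'_T W_T G_{T-1}\cdots G_1 W_1\,\ket{\psi_G}$, where $\ket{\psi_G}$ is a pure Gaussian state on $N:=l+m$ modes (a pure Gaussian initialization may be assumed without loss of generality, and one may further write $\ket{\psi_G}=G_0\ket{0^{N}}$). Restricting to the output modes can only help, since marginalizing a distribution does not increase the concentration width of Lipschitz functions, so it suffices to bound $\sigma_T$ for the full $N$‑mode computational‑basis outcome distribution of this state.

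The heart of the argument is a hybrid over the $T$ non‑Gaussian layers $W_1,\dots,W_T$. Propagating the computational‑basis POVM $\prod_j\tfrac{\II+(-1)^{s_j}i\,c_{2j-1}c_{2j}}{2}$ backwards through the circuit, one tracks a family of mutually commuting Hermitian observables squaring to $\II$. A Gaussian layer conjugates each such observable by an orthogonal transformation of the Majorana operators, which leaves its Majorana degree unchanged — so Gaussian operations are genuinely \emph{free} in this accounting. A layer $W_t$ of the form in Eq.~\eqref{eq:nongaussian_operation}, being a product of commuting unitaries each acting on $\le 4$ Majoranas whose index sets are pairwise disjoint, replaces each Majorana in the support of an observable by a sum of a degree‑$1$ and a degree‑$3$ term, hence multiplies the Majorana degree of every such observable by at most $3$. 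After all $T$ layers we are left with the problem of measuring the fixed pure Gaussian state $\ket{\psi_G}$ (equivalently $\ket{0^{N}}$) with a family of mutually commuting $\pm1$‑valued observables of Majorana degree at most $D:=2\cdot 3^{T}$; the base case $T=0$ of this problem ($D=2$) is precisely the statement that measuring a pure fermionic Gaussian state in a Gaussian basis yields a determinantal point process, hence a strong Rayleigh measure, for which $1$‑Lipschitz functions concentrate with variance proxy $O(\EE\lvert s\rvert)=O(N)$ (Pemantle–Peres).

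The main obstacle is to carry this concentration estimate through the non‑Gaussian layers with only a constant ($=3$) loss per layer, i.e.\ to show that inserting one layer $W_t$ before a Gaussian‑basis measurement worsens the Lipschitz concentration width by at most a factor $3$. The difficulty is that the interleaved Gaussian operations are globally supported, so there is \emph{no light‑cone locality} to exploit and one cannot reduce to a martingale over the input modes as in the qubit NLTS proofs; instead the argument must use essentially that each gate of $W_t$ touches only $O(1)$ Majoranas and that non‑overlapping gates commute. Concretely, after absorbing the post‑layer Gaussian one compares measuring the state $\psi^{(t-1)}$ produced by the first $t-1$ layers in the basis $\mathcal B W_t$ with measuring it in the Gaussian basis $\mathcal B$, and one must couple the two outcome distributions so that any $1$‑Lipschitz function of the former becomes, up to a bounded‑range correction controlled by the commuting structure, a $3$‑Lipschitz function of the latter, to which the inductive hypothesis applies. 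Quantifying this couple‑and‑coarse‑grain step is where the real work lies; it is also what pins down the constant $3$, hence the base‑$3$ logarithm in the statement, and what makes the dependence on $(l+m)$ rather than $l$ alone unavoidable, since the ancillary modes are mixed into the relevant observables by the intermediate layers and thereby enter the variance proxy — which is exactly why the companion theorem must restrict to $m=O(n)$ ancillas.
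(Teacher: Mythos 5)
Your weight-growth accounting is exactly right and matches the paper: Gaussian layers conjugate Majorana monomials without changing their degree, and each layer $W_t$ of commuting, non-overlapping gates on at most four Majoranas increases the degree by at most a factor of $3$, which is where the base-$3$ logarithm comes from. But the core of your argument --- the inductive claim that inserting one non-Gaussian layer before a Gaussian-basis measurement degrades the sub-Gaussian concentration width of Lipschitz functions by at most a constant factor --- is precisely the step you leave unproven (``quantifying this couple-and-coarse-grain step is where the real work lies''), and it is not a routine step. Without light cones there is no martingale structure, and it is far from clear that strong Rayleigh/determinantal concentration for the Gaussian base case can be propagated through layers of commuting quartic gates by any coupling argument; concentration of outcome distributions does not compose multiplicatively through unitary layers in any standard way. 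As written, the proposal is a plan whose central lemma is missing, so it does not constitute a proof.

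The paper's route avoids concentration entirely. It takes the commuting parent Hamiltonian $Q=\frac{1}{l+m}\sum_j \frac{1}{2}(\mathbb{I}-i c_{2j-1}c_{2j})$, whose unique ground state is the Gaussian initialization, and observes that $UQU^\dagger$ is $2\cdot 3^T$-local by the same weight-growth bookkeeping you identified (Gaussians are free, each $W_t$ costs a factor $3$). It then approximates the ground-state projector of $UQU^\dagger$ by a degree-$f$ matrix polynomial (the Kahn--Linial--Samorodnitsky / Buhrman--Cleve--de Wolf construction), with operator-norm error $\exp(-f^2/100(l+m))$; choosing $f=L/(4\cdot 3^T)$ makes this polynomial $L/2$-local, so all its matrix elements between measurement outcomes at Hamming distance at least $L$ vanish (Gaussian conjugation of the POVM preserves this locality argument). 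Well-spreadness forces the corresponding matrix element of the true projector to be at least $\mu$, so $\mu \le \exp\bigl(-L^2/(1600(l+m)3^{2T})\bigr)$, which is the stated bound with its explicit constant. If you want to salvage your approach you would need to either prove the per-layer concentration transfer lemma from scratch, or recognize that the polynomial-approximation-of-the-projector argument is the standard device that substitutes for it.
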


\begin{proof}
    The following parallels the proof of Fact 4 from \cite{anshu2022nlts}. The key additional observation is that Gaussian circuits, however deep, do not change the locality of fermionic operators. Consider preparing $\rho$ as 
    \begin{align}        \rho=\Tr_{\left[2(m+l)\right]\bs\left[2l\right]}\left(U\sigma_G U^\dag\right),
    \end{align}
    with a fermionic circuit $U$ and the pure Gaussian starting state
    \begin{align}
    \sigma_G=\frac{1}{2^{l+m}}\prod^{l+m}_{k=1} (\mathbb{I}+i c_{2j-1} c_{2j}).
    \end{align}
    
    Consider the operator $Q=\frac{1}{(l+m)}\sum^{l+m}_{j=1} \frac{1}{2}(\mathbb{I}-i c_{2j-1} c_{2j})$. The spectrum of $Q$ is $(0,1/(l+m),2/(l+m),..,1)$, with the non-degenerate ground state $\rho_G$.
    The Hamiltonian $U Q U^\dag$ has the same spectrum as $Q$ and has $U\sigma_GU^{\dagger}$ as the ground state; we now prove that this Hamiltonian is also $2 3^T$-local, where $T$ is the depth of $U$. Indeed, (i) any Gaussian circuit $G_t$ in $U$ (cf. Eq.\,\eqref{eq:fermion_circuit_free_gauss}) does not change the locality of a fermionic operator $C_{K'}$ in $U Q U^\dag$, while (ii) every layer $W_t$ of a non-Gaussian circuit can increase the weight of $C_{K'}$ at most by a factor $3$. To show point (i), consider the two options for a transformation of any Majorana monomial $C_{K'}$ with an elementary Gaussian operation:
\begin{align}
\label{eq:Gaussian_weight_transform}
    e^{-i w_{k_1,k_2}\,c_{k_1}c_{k_2}}C_{K'}e^{i w_{k_1,k_2}\,c_{k_1}c_{k_2}}=
\begin{cases}    
    C_{K'}(\cos(2w_{k_1,k_2})+ i\sin(2w_{k_1,k_2})\,c_{k_1}c_{k_2})~&\mathrm{if}~\{C_{K'},c_{k_1}c_{k_2}\}=0,\\
    C_{K'}~&\mathrm{if}~[C_{K'},c_{k_1}c_{k_2}]=0.
    \end{cases}
\end{align}
In the first option, $k_1$ or $k_2$ must belong to $K'$ and therefore $C_{K'}c_{k_1}c_{k_2}$ has the same weight as $C_{K'}$. The second option trivially conserves locality.  To show point (ii), consider the transformation similar to Eq.~\eqref{eq:Gaussian_weight_transform}, using an elementary gate $\exp\left(i w_{K}\,C_K\right)$ with $|K|=4$; this transformation can only add weight $2$ to $C_{K'}$, and only if $\{C_{K'},C_K\}=0$. Since $W_t$ only contains non-overlapping generators $C_K$, there will be at most $|K'|$ generators such that $\{C_{K'},C_K\}=0$. Therefore, a transformation of $C_{K'}$ with $W_t$ will produce operators with a weight of at most $|K'|+|K'|\cdot 2= 3|K'|$.\\

We now use a matrix polynomial approximation $P(U Q U^\dag)$ to the ground state of degree $f$ (concretely, the polynomial construction in \cite{KahnLS96,BuhrmanCWZ99,anshu22area}) such that
    \begin{align}
        \| \ket{\psi}\!\bra{\psi} - P(U Q U^\dag) \|_\infty\leq \exp \left(-\frac{f^2}{100 (l+m)}\right).
        \label{eq:approx_projector}
    \end{align}
    As $U Q U^\dag$ is $2\cdot3^T$-local, $P(U Q U^\dag)$ is $2f 3^T$-local. Fixing $f=\frac{L}{4\cdot 3^T}$, we obtain $P(U Q U^\dag)$ that is $L/2$-local. The right-hand-side of Eq.~\eqref{eq:approx_projector} then becomes $\exp (-\frac{L^2}{1600 (l+m)\cdot 3^{2T}})$.

    Next, from $\rho$ being $(\Omega(1), \Omega(n))$-spread there exists a Gaussian $G$ and $S_1, S_2\subset \bits^l$ with $\min_{s_{1,2}\in S_{1,2}}|s_1-s_2|\geq L$ such that
    \begin{align}
        \| \sum_{s_{1,2}\in S_{1,2}} \pi_G(s_1) \ket{\psi}\!\bra{\psi} \pi_G(s_2)\|_\infty \geq \mu.
    \end{align}
    On the other hand, from the $L/2$-locality of $P(U Q U^\dag)$ it follows that
    \begin{align}
        \| \sum_{s_{1,2}\in S_{1,2}} \pi_G(s_1) P(U Q U^\dag) \pi_G(s_2)\|_\infty=0,
        \label{eq:locality_vanishing}
    \end{align}
    since $\pi_G(s_1) ~C ~\pi_G(s_2)=0$ for any $|s_1-s_2|\leq L$ and $L/2$-local $C$. The latter condition is true since the Gaussian transformation $G$ conserves the locality of operator $C$, and because this condition manifestly holds for the case $G=\mathbb{I}$. Collecting Eqs.~\eqref{eq:approx_projector}-\eqref{eq:locality_vanishing}, we arrive at:
    \begin{align}
        \exp \left(-\frac{L^2}{1600 (l+m)\cdot 3^{2T}}\right)\geq \mu,
    \end{align}
    which amounts to the claimed lower bound on $T$.
\end{proof}

\section{Qubit PCP versus Fermionic PCP}
\label{sec:PCP}

The quantum PCP complexity class (QPCP) was defined in e.g. \cite{AALV:detect}. The quantum PCP conjecture \cite{AAV} asks whether ${\rm QPCP}[q=0(1)]$ is equal to QMA, where $q=O(1)$ is the number of qubits of the proof which are checked. Here we introduce a new class ${\rm FermPCP}[q=O(1)]$ which accesses $q$ fermionic modes of a (fermionic) proof and argue that this class could be larger than ${\rm QPCP}[q=O(1)]$ due to the limitations of fermion-to-qubit mappings. \\

In the definitions below we do not make explicit reference to the number of random bits of the verifier: it is sufficient if this is at most ${\rm poly}(\log n)$ when the number of qubits (or fermionic modes) that the verifier accesses is $q=O(1)$ or $q=O(\log n)$ which is what we will use in this section. In addition, we allow the verifier to simulate randomness in the quantum circuit by having (${\rm poly}(n)$) ancillary qubits. We start by reproducing the definition of QPCP:

\begin{definition}[{\rm QPCP}]
    A promise problem $L=L_{\rm yes} \cup L_{\rm no} \in {\rm QPCP[q]}$ if there exists a quantum polynomial verifier and a polynomial $p(.)$ with the following properties. The verifier receives as input a classical string $x$ and a $p(x)$-qubit density matrix $\rho$.
The verifier randomly picks $q$ qubits out of $p(x)$ according to some scheme using her random bits, and then runs a polynomial-sized quantum circuit $V(x,\rho)$, which can use ${\rm poly}(n)$ ancillary qubits, and the circuit $V$ accesses only the chosen $q$ qubits as inputs. The accept/reject output of $V$ is obtained by a measurement in the $Z$-basis of one of the ancilla qubits. Then $L\in {\rm QPCP[q]}$ when,
\begin{itemize}
    \item If $x\in L$, $\exists \rho$ such that ${\rm Prob}(V(x,\rho)\; {\rm accepts})\geq 2/3$.
    \item If $x\not \in L$, $\forall \rho$, ${\rm Prob}(V(x,\rho)\; {\rm accepts})\leq 1/3$.
\end{itemize}
\end{definition}

Here is the definition of the analogous fermionic class:

\begin{definition}[{\rm FermPCP}]
    A promise problem $L=L_{\rm yes} \cup L_{\rm no} \in {\rm FermPCP[q]}$ if there exists a quantum polynomial fermionic verifier and a polynomial $p(.)$ with the following properties. The verifier receives as input a classical string $x$ and a $p(x)$-fermionic state $\rho_f$. The verifier randomly picks $q$ fermionic modes (i.e., $2q$ operators) out of $p(x)$ according to some scheme using her random bits, and then runs a polynomial-sized quantum circuit $V(x,\rho_f)$, possibly using ${\rm poly}(n)$ ancillary qubits, and the circuit $V$ accesses only the chosen $q$ fermionic modes as inputs. The accept/reject output of $V$ is obtained by a measurement in the $Z$-basis of one of the ancilla qubits.
Then $L \in {\rm QPCP[q]}$, when
\begin{itemize}
    \item If $x\in L_{\rm yes}$, $\exists \rho_f$ such that ${\rm Prob}(V(x,\rho_f)\; {\rm accepts})\geq 2/3$.
    \item If $x\in L_{\rm no}$, $\forall \rho_f$, ${\rm Prob}(V(x,\rho_f)\; {\rm accepts})\leq 1/3$.
\end{itemize}
\end{definition}

We note that the fermionic verifier and its access to a fermionic proof could be mapped to a qubit verifier and its proof by a standard fermion-to-qubit mapping such as the Jordan-Wigner transformation, but such a mapping clearly does not preserve the limited access structure of QPCP. Before we prove two statements about the relations between these classes, let us review what problem is complete for these classes and define the local Hamiltonian density problem:

\begin{definition}[(Fermionic) Local Hamiltonian Density Problem (LHD)]
    Let $H=\frac{1}{m}\sum_{i=1}^m H_i$ be a local $n$-qubit (resp. $n$-fermion) Hamiltonian with $H_i\geq 0$, $||H_i||=O(1)$. Either $\lambda(H)\leq a $ or $\lambda(H)\geq b $, for constants $a > 0,b > 0$ and a constant $\epsilon=b-a> 0$. The qubit (resp. fermionic) local Hamiltonian density (${\rm LHD}$) problem is to decide which is the case.
    \end{definition}

It is known that ${\rm LHD} \in {\rm QPCP}[q=O(1)]$ by the proof of \cite{KSV} and identical arguments can be made to show that fermionic ${\rm LHD} \in {\rm FermPCP}[q=O(1)]$. The idea is that the expectation value of any (fermionic) Hamiltonian term $H_i$ can be estimated using qubit ancillas. Here, one considers a spectral decomposition of $H_i=\sum_j w_{j,i} \sigma_{j,i}$ with $1 \geq \omega_{j,i}\geq 0$, and one defines a unitary which uses an additional ancilla qubit in the state $\ket{0}$ and maps 
\begin{equation}
\sigma_{j,i}\otimes \ketbra{0}\rightarrow \sigma_{j,i}\otimes \left(\sqrt{w_{j,i}}\ket{0}+\sqrt{1-w_{j,i}}\ket{1}\right)\left(\sqrt{w_{j,i}}\bra{0}+\sqrt{1-w_{j,i}}\bra{1}\right),
\end{equation}
(see \cite{KSV}). The desired expectation can be estimated by estimating the probability of obtaining the outcome `$1$' when the qubit is measured. \\

The opposite, namely that qubit LHD is hard (and thus complete) for ${\rm QPCP}[O(1)]$, has been informally proven in \cite{AAV}. For completeness, we prove it here and we extend it to a fermionic version:

\begin{proposition}
    The qubit ${\rm LHD}$ problem is ${\rm QPCP}[O(1)]$-hard by a polynomial-time quantum reduction. Similarly, the fermionic ${\rm LHD}$ problem is ${\rm FermPCP}[O(1)]$-hard by a polynomial-time quantum reduction.
    \label{prop:hard}
\end{proposition}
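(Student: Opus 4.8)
The plan is to prove the qubit case first, then observe that the argument is ``fermion-agnostic'' in the sense that the only thing used about the proof register is that the verifier accesses $q=O(1)$ of its subsystems (qubits or fermionic modes), so the same reduction works in the fermionic setting essentially verbatim once one replaces ``qubit'' by ``fermionic mode'' and ``Pauli'' by ``Majorana monomial''. Fix a language $L=L_{\mathrm{yes}}\cup L_{\mathrm{no}}\in{\rm QPCP}[q]$ with verifier $V$, polynomial proof length $p(|x|)$, and ${\rm poly}(\log n)$ random bits. Given input $x$, I want to produce in polynomial time a local sparse Hamiltonian $H=\frac1m\sum_i H_i$ on $p(|x|)$ (plus ancilla) qubits such that $\lambda(H)\le a$ if $x\in L_{\mathrm{yes}}$ and $\lambda(H)\ge b$ if $x\in L_{\mathrm{no}}$, with the gap $b-a=\Omega(1)$.

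The construction is the standard ``Kitaev clock plus consistency'' packaging of the verification circuit, specialized to the PCP access structure. For each setting $r$ of the verifier's random string (there are only $2^{{\rm poly}(\log n)}={\rm poly}(n)$ of them), the verifier picks a fixed $q$-tuple of proof subsystems, appends ${\rm poly}(n)$ ancillas in $\ket{0}$, runs a poly-size circuit $V_r$, and measures one ancilla. By Kitaev's circuit-to-Hamiltonian construction, for each $r$ there is a local Hamiltonian $H^{(r)}$ (propagation terms $+$ clock terms $+$ an input-check term forcing the ancillas to start in $\ket{0}$ $+$ an output penalty $\Pi_{\mathrm{rej}}$ on the final configuration) acting on the $q$ chosen proof subsystems together with a private ancilla/clock register for that branch, whose smallest eigenvalue on any fixed proof state $\rho$ is, up to the usual inverse-polynomial rescaling, tied to the rejection probability $1-{\rm Prob}(V_r(x,\rho)\ \text{accepts})$. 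One then sets $H=\frac1{2^{R}}\sum_r H^{(r)}$, i.e. averages over the random strings. Crucially, each branch uses its \emph{own} fresh clock/ancilla register, so the proof subsystems are the only shared resource; locality and $O(1)$-sparsity hold because each $H_i$ touches $O(1)$ qubits and because the ${\rm poly}(n)$-to-${\rm poly}(n)$ structure keeps the per-qubit degree $O(1)$ (the proof subsystems appear in only $O(1)$ branches if the verifier's sampling scheme has bounded overlap, and if not one can pad / repeat to enforce this, or simply note that constant locality and $\|H_i\|=O(1)$ already suffice for the LHD definition as stated). Completeness: if $x\in L_{\mathrm{yes}}$, feed the good proof $\rho$ into all branches; since each branch accepts with probability $\ge 2/3$, the averaged energy is $\le a$ for a suitable constant $a$. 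Soundness: if $x\in L_{\mathrm{no}}$, then for \emph{every} proof $\rho$ and every $r$ the branch rejects with probability $\ge 2/3$, and a Kitaev-style spectral-gap argument (the projection lemma / Kitaev's bound for the clock Hamiltonian, applied branch by branch and then averaged) lower-bounds $\lambda(H^{(r)})$, hence $\lambda(H)\ge b$ with $b-a=\Omega(1)$ after the standard rescaling that makes the construction a density problem rather than a $1/{\rm poly}$-precision problem. The reduction runs in polynomial (quantum, or even classical) time since it only has to write down the $2^R={\rm poly}(n)$ branch Hamiltonians.

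For the fermionic statement, I would repeat the above with a fermionic verifier acting on $q$ fermionic modes of a fermionic proof $\rho_f$. The only subtlety is bookkeeping: the branch Hamiltonian $H^{(r)}$ now lives on $2q$ Majorana operators of the proof together with a private ancilla register, and the resulting $H$ should be a local sparse $n$-\emph{fermion} Hamiltonian; since the verifier's circuit $V_r$ is itself a finite circuit on the $q$ proof modes plus qubit ancillas, one can keep the ancilla/clock part as qubits and only the ``proof access'' part as Majorana monomials, so all $H_i$ are products of $O(1)$ Majorana operators times $O(1)$ Paulis on the ancilla register — still local and sparse in the sense of the (fermionic) LHD definition. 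Completeness and soundness go through identically, using the same Kitaev spectral analysis (which only cares about the circuit, not about whether the input register is bosonic or fermionic).

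I expect the main obstacle to be the soundness analysis, specifically controlling the spectral gap of the \emph{averaged} Hamiltonian $\frac1{2^R}\sum_r H^{(r)}$ uniformly in the proof state: one must argue that no single proof can make a nonnegligible fraction of branches have anomalously low energy, which is exactly where the QPCP soundness (${\rm Prob}(\text{accept})\le 1/3$ for \emph{all} $\rho$) is used, combined with Kitaev's lemma to turn ``each branch rejects with constant probability'' into ``each branch Hamiltonian has energy $\ge$ const on every input.'' Getting the constants to line up so that $b-a$ is a genuine positive constant (and not shrinking with $n$) requires the usual care with the $1/{\rm poly}$ factors in the clock construction, which is routine but is the one place where a naive argument would fail; everything else (locality, sparsity, polynomial runtime, the fermionic translation) is bookkeeping.
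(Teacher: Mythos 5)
Your reduction does not work: the Kitaev clock construction cannot deliver the \emph{constant} promise gap that the LHD problem demands, and this is not the ``routine'' constant-chasing you describe. For a branch circuit of size $T=\mathrm{poly}(n)$, the history state built on any proof (even a rejected one) pays an energy of only $O\!\left((1-p^{r}_{\rm accept})/T\right)$ through the output penalty, and Kitaev's soundness bound for the clock Hamiltonian is itself only $1/\mathrm{poly}(T)$. After averaging over branches and normalizing by the $\mathrm{poly}(n)$ local terms per branch, your Hamiltonian has $\lambda(H)=1/\mathrm{poly}(n)$ in the no case, i.e.\ you have reproduced the standard inverse-polynomial-gap local Hamiltonian problem, not the density problem with constants $a<b$. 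The ``standard rescaling that makes the construction a density problem'' that you invoke does not exist; converting the clock construction's $1/\mathrm{poly}$ gap into a constant energy-density gap is gap amplification, essentially the quantum PCP conjecture itself. So the step you flag as the likely obstacle is in fact fatal to this route.

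The paper's proof avoids circuit-to-Hamiltonian machinery entirely, and this is precisely what the $q=O(1)$ access structure buys: for each random-string branch $i$, the verifier's whole polynomial-size circuit touches only the $q$ chosen proof qubits (all other inputs are ancillas fixed to $\ket{0}$), so its effect on the proof is summarized by one traceless $q$-local observable $O_i$ with $p^i_{\rm accept}=\tfrac12\left(1+\Tr(O_i\rho)\right)$, obtained by propagating the acceptance measurement $Z_1$ back through the branch circuit, $O_i=\mathcal{S}_i^{\dagger}(Z_1)$. Setting $H_i=\tfrac12 p_i(\mathbb{I}-O_i)$ makes the energy of $\rho$ an affine function of the total acceptance probability, so the $2/3$ versus $1/3$ promise transfers directly into a constant gap for $\lambda(H)$, with no clock register and no ancillas appearing in the Hamiltonian, and $q$-locality is immediate. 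The observables $O_i$ are extracted by process tomography on the constant-input-size channels $\mathcal{S}_i$ (with $1/\mathrm{poly}(n)$ error), which is exactly why the statement asserts a polynomial-time \emph{quantum} reduction, whereas your construction would be classical. The fermionic case then goes through by the same tomographic argument using the fermionic Choi--Jamiolkowski state, with Kraus operators that are even polynomials in the chosen Majorana operators, rather than by re-running a clock construction on fermionic modes.
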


\begin{proof}
    We show that any ${\rm QPCP}[O(1)]$ proof system can be mapped onto a LHD problem. The verifier uses $O(\log n)$ bits to draw from some probability distribution $p_i$ and each $i$ corresponds to picking a certain subset of $q$ qubits and the number of such sets is $m={\rm poly}(n)$. Let the acceptance qubit be labelled ancilla qubit number 1. The probability for acceptance $p_{\rm accept}=\sum_i p_i p_{\rm accept}^i$ with $p_{\rm accept}^i=\frac{1}{2}(1+{\rm Tr}(O_i \rho))$ where $O_i$ is a traceless $q$-qubit observable acting only non-trivially on the chosen $q$ qubits of $\rho$. Then we construct $H$ as $H=\frac{1}{m}\sum_i H_i$ with $H_i=\frac{1}{2} p_i (\mathbb{I}-O_i)$, obeying $||H_i||\leq 1$ and $H_i \geq 0$. Hence, if $x\in L_{\rm yes}$, there exists a $\rho$ such that  $\lambda(H)\leq \frac{1}{3}$. When $x \in L_{\rm no}$, we have $\forall \rho$, $\lambda(H) \geq \frac{2}{3}$. To construct $O_i$, one needs to run the verifier's quantum circuit where we replace the $q=O(1)$ input qubits by all possible eigenstates of the Pauli operators on $q$ qubits and apply process tomography on the superoperator $\mathcal{S}_i(\rho)={\rm Tr}_{\mbox{\tiny anc. but 1}}V_i  \rho \otimes \ket{00\ldots 0}\bra{00\ldots 0} V_i^{\dagger}$, where $\ket{00\ldots 0}$ is the initial state of all ancillary qubits and $V_i$ is the circuit $V$ using the Pauli qubit operators in the chosen subset $i$. Using process tomography one constructs the $q$ to $q+1$-qubit TPCP map ${\cal S}_i()$ \footnote{Note that it is $q$ to $q+1$ since one ancilla qubit is fixed as $\ket{0}$ on input but is needed/used as acceptance output qubit.}, and thus $S_i^{\dagger}()$ (with negligible error $1/{\rm poly}(n)$), and applies it to $Z_1$ to construct $O_i$, i.e. $O_i=S_i^{\dagger}(Z_1)$.\\
    
    In the fermionic case, all goes through similarly. Let ancilla qubit $1$ be the acceptance qubit. The verifier measures $Z_1=\pm 1$ or $p_{\rm accept}^i=\frac{1}{2}(1+{\rm Tr}( Z_1 V_i \rho \otimes \ket{00\ldots }\bra{00\ldots 0} V_i^{\dagger}))=\frac{1}{2}(1+{\rm Tr}(Z_1 {\cal S}_i(\rho))=\frac{1}{2}(1+{\rm Tr}(O_i \rho))$. Here $V_i$ only uses those fermionic modes in the chosen subset $i$ and the state $\ket{00\ldots 0}$ of all ancillary qubits. To construct ${\mathcal S}_i$ which maps $q$ fermion modes to $q$ fermions and 1 ancilla qubit, we can construct its corresponding Choi-Jamiolkowski state \cite{bravyi2004lagrangian}. Let $c_{i[1]},\ldots c_{i[2q]}$ be the chosen set $i$ of Majorana operators and take an additional set of reference operators $c_{r,1}, \ldots c_{r,2q}$, then $\rho_{\rm Choi}=\mathcal{S}_i(\frac{1}{2^{2q}}\Pi_{j=1}^{2q}(\mathbb{I}+i c_{i[j]} c_{r,j}))$. From $\rho_{\rm Choi}$, a set of $O(1)$ Kraus operators $A_{k,i}$, --which can be expressed as even polynomials in the Majorana operators $c_{i[1]},\ldots c_{i[2q]}$ and single ancilla qubit Paulis--, can be obtained \cite{bravyi2004lagrangian}, and hence $O_i$ can be constructed and thus $H$ can be constructed via a quantum reduction.    
\end{proof}

We note that the qubit LHD problem is not clearly hard for the class ${\rm FermPCP}[O(1)]$. In the following Lemmas we establish some relations between ${\rm QPCP}$ and ${\rm FermPCP}$:

\begin{lemma}
${\rm QPCP}[O(1)] \subseteq {\rm FermPCP}[O(1)]$.     
\end{lemma}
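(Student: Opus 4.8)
The plan is to show that any $\mathrm{QPCP}[O(1)]$ protocol can be simulated by a $\mathrm{FermPCP}[O(1)]$ verifier by encoding the $p(x)$-qubit proof into a $p(x)$-fermion proof via a \emph{local} qubit-to-fermion encoding, so that the few qubits accessed by the qubit verifier correspond to a constant number of fermionic modes. The natural candidate is to devote a constant number of fermionic modes per qubit and use the ``qubit assimilation'' mapping already employed in Theorem~\ref{thm:fNLTS} (Eq.~\eqref{eq:qam_thm}), which locally represents each single-qubit Pauli as a product of two Majorana operators drawn from a size-$3$ per-qubit pool (equivalently, the $1$-qubit-into-$3$-Majorana embedding, or Kitaev's $1$-qubit-into-$4$-Majorana embedding). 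Under such a mapping, reading $q$ qubits corresponds to reading $O(q)$ fermionic modes, so locality of access is preserved — this is precisely the property that the Jordan--Wigner map fails to have, as noted in the paragraph preceding the Lemma.

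The key steps, in order, are as follows. First, fix the local encoding: given the qubit-proof register of size $N = p(x)$, associate to it a fermionic register of $O(N)$ modes so that there is an isometry $E$ (built blockwise, one block per qubit, hence strictly local) with the property that each single-qubit Pauli $P$ on qubit $j$ is mapped to an operator supported on the $O(1)$ fermionic modes of block $j$. Note the image of $E$ is a code subspace cut out by local even-parity stabilizers (one per qubit block for the $3$-Majorana version: the product of all three Majoranas in the block), and a \emph{purely Gaussian, hence fermionic-circuit-free} measurement of these stabilizers projects onto the code. Second, describe the $\mathrm{FermPCP}$ verifier: on input $x$ and a claimed fermionic proof $\rho_f$ on $O(N)$ modes, it (a) draws the same subset $i$ of $q$ qubits as the qubit verifier using the same random bits, which now corresponds to the $O(q)$ fermionic modes in the relevant blocks, (b) reads those modes, (c) on these modes together with $\mathrm{poly}(n)$ ancillary qubits, runs a quantum circuit that first measures/enforces the $O(1)$ local code stabilizers in the accessed blocks and then, conditioned on the proof lying in the code subspace, simulates the original qubit verifier's circuit $V_i$ via the correspondence $P \mapsto E P E^\dagger$ — which is implementable since $E$ restricted to $O(1)$ blocks is an explicit $O(1)$-size isometry. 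Third, analyze completeness and soundness: for $x \in L_{\mathrm{yes}}$, take $\rho_f = E \rho E^\dagger$ where $\rho$ is the accepting qubit proof; the stabilizer checks pass with certainty and the acceptance probability is unchanged, so it is $\geq 2/3$. For $x \in L_{\mathrm{no}}$, given an arbitrary $\rho_f$, decompose it relative to the code: the stabilizer-measurement step means the verifier effectively only ever processes the projection of $\rho_f$ onto the code subspace (up to the fact that it only checks the stabilizers it happens to access), which is (normalized) of the form $E \rho E^\dagger$ for some genuine qubit state $\rho$; hence the acceptance probability is bounded by that of the qubit verifier on $\rho$, namely $\leq 1/3$. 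One must be slightly careful that the verifier only checks the stabilizers in the accessed blocks, not globally — but since the original verifier's acceptance observable $O_i$ is itself supported only on those blocks after encoding, checking the local stabilizers there suffices, and for any fermionic $\rho_f$ the accessed-block marginal, post-stabilizer-projection, is consistent with \emph{some} qubit state on those qubits, which is all the soundness bound needs.

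The main obstacle I expect is the soundness argument: an arbitrary fermionic proof $\rho_f$ need not be in (or even block-diagonal with respect to) the code subspace, and fermionic states have global parity constraints and can carry correlations between the ``code'' and ``error'' sectors across blocks that are not accessible to any $O(1)$-mode verifier. The clean way to handle this is to observe that because the qubit verifier's effective observable $O_i = S_i^\dagger(Z_1)$ (in the notation of Proposition~\ref{prop:hard}) acts nontrivially only on the $q$ accessed qubits, what the fermionic verifier needs is only that the reduced state on the $O(q)$ accessed modes, after projecting onto the local code constraints of those blocks, equals $E_i \rho_i E_i^\dagger$ for some $q$-qubit state $\rho_i$; and this holds for \emph{any} fermionic input because the local code projector on $O(1)$ blocks is exactly the projector onto the range of the local isometry $E_i$, and every state in that range is of this form — so the fermionic verifier's acceptance probability on $\rho_f$ equals the qubit verifier's on the induced $\rho_i$ (weighted by the stabilizer-pass probability, which only helps soundness), hence $\leq 1/3$. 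Making ``projecting onto the local code constraints'' a legitimate step of the quantum circuit $V$ (it is just a few Gaussian-parity measurements followed by conditional processing, all inside the circuit acting on the $O(q)$ accessed modes plus ancillas) is the last thing to check, and it is routine.
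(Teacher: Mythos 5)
Your high-level route is the same as the paper's (encode each proof qubit into a constant number of Majoranas so that accessing $q$ qubits becomes accessing $O(q)$ modes, push the honest proof through for completeness, and map an arbitrary fermionic proof back to a qubit proof for soundness), but the execution has a genuine problem. In the $3$-Majorana-per-qubit version you describe, the purported block ``stabilizer'' $i c_{x,j}c_{y,j}c_{z,j}$ is an \emph{odd} (weight-$3$) operator, and these operators \emph{anticommute} pairwise across different blocks; they do not form a commuting stabilizer group, do not cut out a code subspace, and ``measuring'' them is not even a parity-respecting fermionic operation (nor Gaussian, contrary to your aside). In the paper's qubit-assimilation map (Lemma~\ref{lem:qam}) these very products are the extra Majorana operators $\tilde{c}_j$ of an auxiliary fermionic register, see Eq.~\eqref{eq:qam_c} --- degrees of freedom that are simply never used, not constraints fixed to $+1$. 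Because that map is a \emph{unitary} isomorphism between $n$ qubits plus $n/2$ fermionic modes and $3n/2$ fermionic modes, the paper needs no projection step at all: completeness is immediate, and for soundness one applies the inverse map \eqref{eq:qam_c_inv} to the arbitrary fermionic proof $\rho_f$ and traces out the $\tilde{c}$-sector, which the verifier's circuit never touches, obtaining a single global qubit state with exactly the same acceptance probability. Your code-subspace picture becomes correct only if you switch to Kitaev's $4$-Majorana-per-qubit embedding, where the block operators $c_1c_2c_3c_4$ are even and mutually commuting.

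Even granting the $4$-Majorana version, your soundness step as written has a gap: you argue that ``the accessed-block marginal, post-stabilizer-projection, is consistent with some qubit state on those qubits, which is all the soundness bound needs.'' It is not. QPCP soundness quantifies over a \emph{single global} qubit proof whose marginals feed every random subset $i$ simultaneously; per-subset consistency with possibly mutually inconsistent qubit states does not by itself yield the $1/3$ bound. The repair is to fix one blockwise decoding channel applied to the \emph{entire} fermionic proof (decode on stabilizer-pass, output an arbitrary fixed state on fail), producing one global qubit state $\rho_q$; since your verifier rejects on stabilizer failure and the acceptance POVM element is positive semidefinite, its acceptance probability on $\rho_f$ is upper bounded by the qubit verifier's acceptance on $\rho_q$, hence at most $1/3$. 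That global decoding is precisely the role played in the paper by the inverse assimilation map followed by the partial trace over the $\tilde{c}_j$ modes, so with these two corrections your argument collapses onto the paper's proof rather than giving an alternative to it.
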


\begin{proof}
Let $L \in {\rm QPCP}$, and so we seek to construct a fermionic proof system for $L$. First, if $p(x)$ is odd, add an extra qubit in $\ket{0}$ to the proof so that $p(x)$ is even. In addition, wlog we assume that $q$, the number of access qubits is even. We use the unitary qubit-assimilation mapping $\mathcal{M}$ of Lemma \ref{lem:qam} in Appendix \ref{sec:qam} to map the $k=p(x)$-qubit state $\rho$ onto a state $\rho_f$ on $3k$ Majorana operators ($3k/2$ fermionic modes), that is, if we expand the witness state $\rho$ in terms of Pauli $X_j,Z_j$ and we replace those by Eqs.~\eqref{eq:qam_x},~\eqref{eq:qam_z}. Similarly, each occurrence of $X_j$ or $Z_j$ where $j$ is one of the chosen input qubits in the circuit of the verifier is replaced using Eqs.~\eqref{eq:qam_x},~\eqref{eq:qam_z}. The ancillary qubits remain as is. From the qubit proof, the verifier selects $q$ qubits, and this thus corresponds to selecting $3q$ Majorana operators ($3q/2=O(1)$ fermionic modes). Thus we can map the qubit proof for $x\in L_{\rm yes}$ directly onto a fermionic proof and the acceptance probability is the same. Note that the additional fermionic Hilbert space in Lemma \ref{lem:qam} is simply not used in this conversion.\\

When $x\in L_{\rm no}$, the verifier proceeds identically using the mapping $\mathcal{M}$, but the prover may provide an arbitrary fermionic state $\rho_f$ of $3k/2$ fermionic modes as input, not obeying the mapping $\mathcal{M}$. So we need to argue that this state can always be mapped to some qubit state $\rho_q$, such that if the fermionic verifier is fooled in accepting by this state, then so will the qubit verifier with input $\rho_q$, which was excluded by definition. The state $\rho_f$ will be an even polynomial in the operators $c_{y,j},c_{z,j}$ and $c_{x,j}$. Using the $\tilde{c}_j$ operators in Eq.~\eqref{eq:qam_c}, these operators can be mapped to qubit Pauli operators $X_j, Z_j$ and $\tilde{c}_j$, hence obtaining a state $\rho_{qf}$. Then we take $\tilde{\rho}_q={\rm Tr}_f(\rho_{qf})$, the partial trace over the fermionic system to get a $n$-qubit state $\rho_q$ (effectively this means omitting any term which involves the operators $\tilde{c}_j$).
Note that since the fermionic verifier uses ${\cal M}$ the fermionic gates and final measurement of the fermionic verifier never use the operators $\tilde{c}_j$ and hence tracing out has no effect on the action of the verifier and the selection of $3q/2$ fermionic modes corresponds precisely to the selection qubits of $q$ qubits in $\rho_q$. Thus if $\rho_f$ would lead the fermionic verifier $V$ to accept with probability $> 1/3$, then the qubit verifier would accept with probability $> 1/3$ on the state $\rho_q$, in contradiction with the qubit proof system.
\end{proof}

\begin{lemma}\label{lem:ferm_to_q}
${\rm FermPCP}[O(1)] \subseteq {\rm QPCP}[O(\log n)]$.     
\end{lemma}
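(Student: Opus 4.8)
The plan is to simulate a FermPCP verifier by a QPCP verifier that uses a Jordan--Wigner-type fermion-to-qubit mapping, while spending the extra $O(\log n)$ query budget to recover the locality of access that the mapping destroys. Concretely, suppose $L \in {\rm FermPCP}[q]$ with $q = O(1)$, witnessed by a polynomial fermionic verifier acting on a $p(x)$-fermionic-mode proof $\rho_f$ and querying $q$ modes (i.e.\ $2q$ Majorana operators). We fix a Jordan--Wigner encoding $c_{2j-1} \mapsto (\prod_{k<j} Z_k) X_j$, $c_{2j} \mapsto (\prod_{k<j} Z_k) Y_j$ on $p(x)$ qubits. The honest prover, given $x \in L_{\rm yes}$, sends the qubit state $\rho_q$ obtained from $\rho_f$ under this encoding; the qubit verifier will simulate the fermionic verifier's circuit, which is legitimate because any fermionic quantum circuit on a fixed number of modes is itself a (poly-size) qubit circuit once the Majoranas are encoded. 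The only thing that goes wrong is queries: a fermionic query to modes $i_1 < \dots < i_q$ (say, among $[p(x)]$) becomes, under Jordan--Wigner, a set of qubit operators each of which carries a Pauli-$Z$ string stretching back to qubit $1$, so naively the qubit verifier would have to touch up to $p(x) = {\rm poly}(n)$ qubits.

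The key idea to fix this is that the verifier's $O(\log n)$ random bits already encode \emph{which} $q$ modes are queried, and hence the identities $i_1 < \dots < i_q$; so the qubit verifier knows, as classical side information, exactly which $Z$-strings it needs. Rather than reading all the intervening qubits, I would have the qubit verifier query $O(\log n)$ \emph{additional} qubits that let it evaluate the parities $\prod_{k < i_t} Z_k$ without reading all of $[1, i_t)$. This is where the structure of the proof has to do real work: the honest prover should be asked to provide, alongside $\rho_q$, a small number of ``parity ancilla'' qubits whose $Z$-values equal running parities $\bigoplus_{k \le r} z_k$ at $O(\log n)$ checkpoints $r$ (e.g.\ a balanced-binary-tree of partial parities over the $p(x)$ qubits, of which any single prefix parity is an XOR of $O(\log p(x)) = O(\log n)$ tree nodes). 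The verifier, using its random bits to learn $i_1, \dots, i_q$, then queries the $q$ ``real'' qubits $i_1, \dots, i_q$ together with the $O(\log n)$ tree-node qubits needed to reconstruct the $q$ prefix parities, for a total of $q + O(\log n) = O(\log n)$ queries. On those queried qubits it first checks (in the $Z$ basis, on the tree-node qubits and their children where both happen to be queried — or, more cleanly, it enforces tree consistency probabilistically the way PCP-of-proximity composition does) that the claimed parities are consistent, rejecting if not; conditioned on consistency it can apply the correctly-signed Jordan--Wigner images of the $2q$ Majoranas and run the fermionic verifier's circuit on them.

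For soundness, when $x \in L_{\rm no}$ the prover sends an arbitrary $(p(x) + O(\log n))$-qubit state. I would argue: either the parity-ancilla block is (approximately) consistent with the running parities of the first block, in which case the queried operators genuinely act as the Jordan--Wigner images of Majoranas on $\rho_f := $ (the fermionic state corresponding to the first block), and the fermionic soundness bound $\le 1/3$ applies; or it is inconsistent, in which case the consistency test catches it with noticeable probability. One has to balance these two cases and amplify (by $O(1)$ parallel repetitions of the whole protocol, increasing $q$ by only a constant factor) to push acceptance below $1/3$; the standard way is to design the test so that $\Pr[\text{accept}] \le \max(1/3 + \delta,\, 1 - \eta\cdot(\text{inconsistency}))$ and then repeat. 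The main obstacle, and the step I would spend the most care on, is exactly this soundness analysis: making the prefix-parity gadget \emph{locally testable} with only $O(\log n)$ queries so that an adversarial prover cannot profit from supplying inconsistent parity bits, while simultaneously ensuring that when the gadget is (close to) consistent the reduction really does produce a bona fide fermionic witness to which FermPCP soundness applies. A secondary subtlety is bookkeeping the Majorana anticommutation signs: the qubit verifier must reproduce the signs of products like $c_{i[a]} c_{i[b]}$ for $a,b$ in the queried set, but since it knows all the indices $i_1 < \dots < i_q$ from its random string, these signs are computable classically and absorbed into the definition of the simulated observable, so this is routine rather than hard.
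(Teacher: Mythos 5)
There is a genuine gap, and it sits exactly where you said you would ``spend the most care'': the soundness of the prefix-parity gadget is not established, and it is not clear it can be. Your parity ancillas are prover-supplied qubits, and in the $x\in L_{\rm no}$ case the prover is adversarial: the ancilla block need not be any function of the main register at all (it can be entangled with it, or simply lie), and a consistency test that touches only $O(\log n)$ qubits cannot certify that an ancilla qubit coincides, as an \emph{operator identity on the whole state}, with the prefix parity $\prod_{k<i_t}Z_k$ of up to ${\rm poly}(n)$ qubits. Prefix-parity data is not locally testable in this sense; passing the few local checks you can afford does not force the global parities to be correct, so the step ``conditioned on consistency the queried operators genuinely act as the Jordan--Wigner images of Majoranas on a bona fide fermionic witness'' is unsupported. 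Without it you cannot invoke the FermPCP soundness bound, and the balancing-and-repetition argument has nothing to rest on.

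The idea you are missing is that the binary-tree-of-partial-parities structure should be built into the \emph{encoding of the state itself} rather than supplied as extra, checkable proof qubits: this is precisely the Bravyi--Kitaev transformation, which the paper uses. BK is a unitary map from $k$ fermionic modes to $k$ qubits under which every Majorana operator becomes a Pauli of weight $O(\log k)$, so the qubit verifier simulates the fermionic verifier by querying $O(\log n)$ qubits with no auxiliary gadget at all. Completeness is immediate, and soundness is automatic because the map is invertible on \emph{all} states: an arbitrary adversarial qubit proof $\rho_q$ pulls back under the inverse BK transformation to a legitimate fermionic state $\rho_f$, and if the qubit verifier accepted $\rho_q$ with probability $>1/3$ the fermionic verifier would accept $\rho_f$ with probability $>1/3$, contradicting the FermPCP soundness. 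In short, replacing Jordan--Wigner by Bravyi--Kitaev dissolves both the query-locality problem and the entire consistency/soundness machinery your proposal needs but does not supply.
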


\begin{proof}
Let $L \in {\rm FermPCP}$, and so we seek to construct a qubit proof system for $L$. Using the Bravyi-Kitaev transformation \cite{bravyi00ferm}, we map the $k=p(n)$-fermionic mode state $\rho_f$ onto a $k$-qubit state and the chosen $2q$ Majorana operators $c_j$ are replaced by products of $O(\log (p(n)))$ single-qubit Pauli operators, hence requiring access to $O(\log n)$ qubits. Thus we can map the fermionic qubit proof for $x\in L_{\rm yes}$ directly onto a qubit proof, albeit with logarithmic access. When $x\in L_{\rm no}$, the prover may provide an arbitrary $k$-qubit state $\rho_q$ as input, not necessarily obeying the Bravyi-Kitaev transformation, while the verifier applies the circuit obtained through the mapping. Applying the inverse of the Bravyi-Kitaev transformation to the state $\rho_q$ leads to fermionic state $\rho_f$ of $k$ fermionic modes. If the qubit verifier accepted on $\rho_q$ with probability $> 1/3$, then applying the inverse would have lead to the fermionic verifier accepting with probability $> 1/3$, which was excluded. 
\end{proof}

Note that the qubit (resp. fermionic) LHD problem which is complete for ${\rm QPCP}[O(1)]$ (respectively ${\rm FermPCP}[(O(1)]$) is not necessarily sparse. Previous results (Theorem 13 in \cite{BH:dense-approx}) give a polynomial-time classical algorithm for the {\em dense} qubit case, so that problem certainly cannot be $\mathrm{QMA}$-hard. 
It is an open question whether the \emph{dense} fermionic LHD problem can be $\mathrm{QMA}$-hard. Finally, whether ${\rm FermPCP}[O(1)] ={\rm QPCP}[(O(1)]$ is an open question, which relates to (1) whether the {\em sparse} fermionic LHD problem is ${\rm FermPCP}[O(1)]$-hard (note that Proposition \ref{prop:hard} merely shows that the general fermionic LHD problem is hard), and (2) the existence of fermion-to-qubit mappings for sparse fermionic interactions which do not introduce non-local constraints. The latter topic is discussed in Section \ref{sec:discuss}.

\section{Acknowledgements}
We thank Joel Klasssen and Sergei Bravyi for discussions on fermion-to-qubit mappings. Y.H. and B.M.T. acknowledge support by QuTech NWO funding 2020-2024 – Part I “Fundamental Research”, project number 601.QT.001-1, financed by the Dutch Research Council (NWO). Y.H and J.H. acknowledge support from the Quantum Software Consortium (NWO Zwaartekracht).  AA acknowledges support through the NSF CAREER Award No. 2238836 and NSF award QCIS-FF: Quantum Computing \& Information Science Faculty Fellow at Harvard University (NSF 2013303).

\section{Discussion}
\label{sec:discuss}

In this paper we constructed fermionic NLTS Hamiltonians from qubit Hamiltonians with well-spread low-energy states. An interesting open question is to consider the reverse construction. It is known that one cannot map all $O(1)$-local Majorana operators to qubit operators of weight less than $\Omega(\log(n))$ (see e.g. the argument in \cite{HSHT}), i.e. the Bravyi-Kitaev construction used in \cref{lem:ferm_to_q} has the optimal scaling in $n$. It must be noted though, that for \emph{sparse} fermionic Hamiltonians there exists the Bravyi-Kitaev superfast encoding \cite{bravyi00ferm}, which unitarily maps local fermionic Hamiltonians terms onto $O(1)$-local qubit interactions. However this mapping is only valid in a subspace specified by a set of non-local stabilizer generators. When mapping a fermionic NLTS Hamiltonian to a qubit NLTS Hamiltonian, these generators would have to be included in the qubit Hamiltonian. In general these generators are not $O(1)$-local, nor would the resulting Hamiltonian necessarily be sparse. Adapting the BK superfast encoding to avoid these two properties is the topic of active research in quantum simulation (see \cite{chien2023simulating} and references therein). In \cref{sec:sf_encoding} we describe a construction that achieves both of these objectives, i.e. it encodes a sparse local fermionic Hamiltonian into a sparse local qubit Hamiltonian, through an adaptation of the BK superfast encoding. However it requires $\Theta(n^2)$ qubits to encode $n$ fermionic modes, making it useless when constructing NLTS Hamiltonians. This leaves open the problem of finding a way to encode an $n$-mode sparse $O(1)$-local fermionic Hamiltonian into an $O(n)$-qubit sparse $O(1)$-local qubit Hamiltonian. 
Given our repeated failures at finding such a construction we believe there might be some fundamental obstruction. If ${\rm FermPCP}[O(1)]$ is strictly larger than ${\rm QPCP}[O(1)]$, pursuing a fermionic PCP theorem might be a fruitful endeavor. Finally, we note that PCP versions of QCMA \cite{WFC:guide} are equivalent when formulated in terms of qubits or fermions.\\

This paper does not make progress towards understanding the quantum PCP conjecture, but shows that one can construct fermionic Hamiltonians with no trivial low energy states. Extending such a result beyond Majorana fermion stabilizer Hamiltonians would be of some interest. On the more practical side, a natural next step would be to give more realistic families of fermionic Hamiltonians without low-energy trivial states. Developing numerical tools in order to diagnose the fermionic NLTS property in realistic Hamiltonians is another direction of practical interest.

\appendix

\section{Qubit assimilation mapping}
\label{sec:qam}

We detail the mapping used in Theorem\,~\ref{thm:fNLTS}. See Ref.\,\cite{BGKT:manybody} for an earlier use of this mapping.
\begin{lemma}
\label{lem:qam}
    For even $n$, consider the Hilbert space $\mathcal{H}^{\rm{(q-f)}}$ of $n$ qubits (with Paulis $X_j$, $Y_j$ and $Z_j$ for $j\in[n]$) and $n/2$ fermions (operators $\tilde{c}_j$ for $j\in[n]$)), and the Hilbert space $\mathcal{H}^{\rm{(f)}}$ of $3n/2$ fermions ($c_{x,j},~c_{y,j},~c_{z,j}$ for $j\in[n]$). There is a unitary map from $\qf{\mathcal{H}}$ to $\fm{\mathcal{H}}$, defined by its action on the generating operators
    \begin{align}
        X_j& \mapsto i c_{y,j} c_{z,j},\label{eq:qam_x}\\
        Z_j& \mapsto i c_{x,j} c_{y,j},\label{eq:qam_z}\\
        \tilde{c}_j& \mapsto i c_{x,j} c_{y,j} c_{z,j}. \label{eq:qam_c}
    \end{align}
\end{lemma}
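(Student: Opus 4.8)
The plan is to show that the prescription \eqref{eq:qam_x}--\eqref{eq:qam_c} sends the generators $\{X_j,Z_j,\tilde c_j\}_{j\in[n]}$ of $\End(\qbfm{\mathcal{H}})$ to operators satisfying exactly the same $*$-algebra relations, so that it extends to a unital $*$-isomorphism $\Phi\colon\End(\qbfm{\mathcal{H}})\to\End(\fm{\mathcal{H}})$. Since $\dim\qbfm{\mathcal{H}}=2^{n}\cdot 2^{n/2}=2^{3n/2}=\dim\fm{\mathcal{H}}$, any $*$-isomorphism between the endomorphism algebras of two Hilbert spaces of the same finite dimension is implemented by a unitary, $\Phi(A)=\mathcal{M}A\mathcal{M}^{\dagger}$, and this $\mathcal{M}\colon\qbfm{\mathcal{H}}\to\fm{\mathcal{H}}$ (unique up to a global phase) is the asserted map.

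Concretely I would first record the images $\Gamma^{X}_{j}=ic_{y,j}c_{z,j}$, $\Gamma^{Z}_{j}=ic_{x,j}c_{y,j}$, $\Gamma^{c}_{j}=ic_{x,j}c_{y,j}c_{z,j}$; each is readily checked to be Hermitian with square $\mathbb{I}$, matching $X_{j}^{2}=Z_{j}^{2}=\tilde c_{j}^{2}=\mathbb{I}$. The organizing point is a parity count: $\Gamma^{X}_{j},\Gamma^{Z}_{j}$ are \emph{even} (quadratic) Majorana monomials while $\Gamma^{c}_{j}$ is \emph{odd} (cubic). Hence for $j\neq k$ the supports are disjoint, and exchanging two such monomials of degrees $d,d'$ contributes a sign $(-1)^{dd'}$; this gives $[\Gamma^{X}_{j},\Gamma^{X}_{k}]=[\Gamma^{X}_{j},\Gamma^{Z}_{k}]=[\Gamma^{Z}_{j},\Gamma^{Z}_{k}]=[\Gamma^{X}_{j},\Gamma^{c}_{k}]=[\Gamma^{Z}_{j},\Gamma^{c}_{k}]=0$ and $\{\Gamma^{c}_{j},\Gamma^{c}_{k}\}=0$, precisely the cross-mode relations ``Paulis on distinct qubits commute, Majoranas on distinct modes anticommute, and qubit operators commute with fermionic operators''. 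The on-site relations $\{X_{j},Z_{j}\}=0$ and $[X_{j},\tilde c_{j}]=[Z_{j},\tilde c_{j}]=0$ follow from a few lines of direct computation, factoring out the unshared Majorana (e.g.\ $\Gamma^{c}_{j}=ic_{x,j}\cdot(c_{y,j}c_{z,j})$, with $c_{x,j}$ commuting past the quadratic factor and $(c_{y,j}c_{z,j})^{2}=-\mathbb{I}$); these are the only genuine computations in the argument.

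Granting the relations, $\Phi$ is well defined on all of $\End(\qbfm{\mathcal{H}})$: that algebra is spanned by the $2^{3n}$ products $\prod_{j}X_{j}^{a_{j}}Z_{j}^{b_{j}}\tilde c_{j}^{e_{j}}$ with $a_j,b_j,e_j\in\{0,1\}$, which is exactly its dimension, so the listed relations form a complete presentation and the images, satisfying them, define a homomorphism. It then suffices to see $\Phi$ is surjective, which is immediate because the images already generate every Majorana operator of the target: $\Gamma^{X}_{j}\Gamma^{c}_{j}=c_{x,j}$, $\Gamma^{Z}_{j}\Gamma^{c}_{j}=c_{z,j}$, and $c_{y,j}=-i\,c_{x,j}\Gamma^{Z}_{j}$. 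A surjective homomorphism between algebras of the same finite dimension $2^{3n}$ is an isomorphism, so $\Phi$ is a $*$-isomorphism, and the unitary-implementation fact above then yields $\mathcal{M}$.

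The step I expect to need the most care is fixing, and then honouring, the convention that makes $\qbfm{\mathcal{H}}$ well defined in the first place: the auxiliary Majoranas $\tilde c_{j}$ must be declared to anticommute among themselves but \emph{commute} with every qubit Pauli (the qubit tensor factor sitting in the even sector), and the whole construction rests on the parities of the images --- two Majoranas for $X_{j},Z_{j}$ versus three for $\tilde c_{j}$ --- being exactly what reproduces this convention faithfully on the purely fermionic side $\fm{\mathcal{H}}$. Once that bookkeeping is in place, the only residual concern --- that some relation is lost so $\Phi$ fails to be injective --- is dispatched by the surjectivity-plus-equal-dimension argument rather than by a separate presentation-dimension count.
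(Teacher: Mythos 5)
Your proof is correct, and its computational core coincides with the paper's: both arguments reduce the lemma to checking that the images $ic_{y,j}c_{z,j}$, $ic_{x,j}c_{y,j}$, $ic_{x,j}c_{y,j}c_{z,j}$ are Hermitian, square to the identity, and reproduce exactly the commutation/anticommutation pattern of $X_j$, $Z_j$, $\tilde c_j$ among themselves and across sites --- including the convention that the $\tilde c_j$ commute with all qubit Paulis, which you rightly flag as the point needing care. Where you diverge is in how this generator-level data is promoted to a unitary. The paper identifies both $\qbfm{\mathcal{H}}$ and $\fm{\mathcal{H}}$ with a $3n/2$-qubit space via Jordan--Wigner, regards the two generating sets as independent Pauli strings, and invokes the stabilizer-formalism fact \cite{CRSS} that a relation-preserving map of Pauli generators is implemented by a Clifford unitary. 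You instead stay at the level of operator algebras: the relations give a presentation of $\End(\qbfm{\mathcal{H}})$ (dimension count over the $2^{3n}$ ordered monomials), the images generate every Majorana of the target (e.g.\ $\Gamma^X_j\Gamma^c_j=c_{x,j}$), so you obtain a surjective unital $*$-homomorphism between full matrix algebras of equal finite dimension, hence a $*$-isomorphism, which is unitarily implemented. The paper's route is shorter given the Clifford-group citation; yours is self-contained, avoids the Jordan--Wigner detour, and makes the surjectivity explicit. The one step to tighten is the claim that ``the listed relations form a complete presentation'': spell out that the abstract algebra on these generators and relations is spanned by the ordered monomials (so has dimension at most $2^{3n}$), while the $2^{3n}$ concrete monomials $\prod_j X_j^{a_j}Z_j^{b_j}\tilde c_j^{e_j}$ are linearly independent in $\End(\qbfm{\mathcal{H}})$ (e.g.\ by trace orthogonality), so the two coincide and any relation-satisfying assignment of the generators extends to a homomorphism; this is routine but deserves a sentence.
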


\begin{proof}

Consider the bases of Hermitian operators in $\mathcal{H}^{\rm{(f)}}$ and $\qf{\mathcal{H}}$,
\begin{align}
   \label{eq:fm_pauli_basis}\mathcal{P}^{\rm{(f)}}(\mathbf{s}=\{s_{\alpha,j}, \alpha\in\{x,\,y,\,z\}\})&=i^{\frac{|\mathbf{s}|(|\mathbf{s}|-1)}{2}}\prod_j c^{s_{x,j}}_{x,j}c^{s_{y,j}}_{y,j}c^{s_{z,j}}_{z,j},\\
   \label{eq:qf_pauli_basis}
   \qf{\mathcal{P}}(\mathbf{s}^{x}=\{s^x_{j}\}, \mathbf{s}^{z}=\{s^z_{j}\}, \mathbf{s}^c=\{s^c_{j}\})&=i^{\mathbf{s}^{x}\cdot\mathbf{s}^{z}+\frac{|\mathbf{s}^c|(|\mathbf{s}^c|-1)}{2}} \prod_j X_j^{s^x_{j}} Z_j^{s^z_{j}} \tilde{c}^{s^c_{j}}_{j}.
\end{align}
Both $\mathcal{H}^{\rm{(q-f)}}$ and $\mathcal{H}^{\rm{(f)}}$ are isomorphic to the $3n/2$-qubit Hilbert space (denote it as $\qb{\mathcal{H}}$) via the Jordan-Wigner transformation. Under this isomorphism, the Pauli basis of Hermitian operators in $\qb{\mathcal{H}}$ is equivalent to the Hermitian bases in Eqs.~\eqref{eq:fm_pauli_basis}-\eqref{eq:qf_pauli_basis}. Sets $\{c_{x,j},~c_{y,j},~c_{z,j}\}$ and $\{X_{j},~Z_{j},~\tilde{c}_{j}\}$ are two alternative sets of generators of the Pauli group, i.e., independent Pauli strings in $\qb{\mathcal{H}}$. The mapping (denote it as $\mathcal{M}$) between these generators defines the mapping on the whole Pauli group,
\begin{align}
\label{eq:qam_pauli_string_map}
    \mathcal{M}\left(\prod_j X_j^{s^x_{j}} Z_j^{s^z_{j}} \tilde{c}^{s^c_{j}}_{j}\right)=\prod_j\mathcal{M}\left(X_j^{s^x_{j}}\right)\mathcal{M}\left(Z_j^{s^z_{j}}\right)\mathcal{M}\left(\tilde{c}^{s^c_{j}}_{j}\right).
\end{align}
Observe that for two elements $\mathcal{P}_A$ and $\mathcal{P}_B$ of the Pauli group defined in terms of $\{X_{j},~Z_{j},~\tilde{c}_{j}\}$ as in Eq.\,~\eqref{eq:qf_pauli_basis}, the map $\mathcal{M}$ has the property 
\begin{align}
\label{eq:qam_pauli_product}
\mathcal{M}\left(\mathcal{P}_A\right)\mathcal{M}\left(\mathcal{P}_B\right)=\mathcal{M}\left(\mathcal{P}_A \mathcal{P}_B\right).
\end{align}
This property follows from Eq.\,~\eqref{eq:qam_pauli_string_map} and the fact that $\mathcal{M}(X_{j})$, $\mathcal{M}(Z_{j})$ and $\mathcal{M}(\tilde{c}_{j})$ mutually commute or anticommute in the same way as $X_{j}$, $Z_{j}$ and $\tilde{c}_{j}$ (among themselves and across different $j$). The property in Eq.~\eqref{eq:qam_pauli_product} implies that the unitary map claimed in the Lemma exists, defined as a Clifford transformation on $\qb{\mathcal{H}}$ \cite{CRSS}.
\end{proof}

Since the map of Lemma \ref{lem:qam} is unitary, it admits an inverse from $\fm{\mathcal{H}}$ to $\qf{\mathcal{H}}$, which acts on the fermionic operators as
\begin{align}
        c_{x,j}\mapsto X_j \tilde{c}_j,~~
        c_{y,j}\mapsto -Y_j \tilde{c}_j,~~
        c_{z,j}\mapsto Z_j \tilde{c}_j.
        \label{eq:qam_c_inv}
\end{align}
A curious observation (which we do not use in this work) is that the mapping of $\fm{H}$ onto $\qf{H}$ using Eqs.\,\eqref{eq:qam_c_inv} can be repeated until all fermions are replaced with qubits.
This repetition procedure can be done in a variety of ways, each giving a full fermion-to-qubit mapping. In the worst case, the scaling features of the resulting mapping resemble those of Bravyi-Kitaev transformation \cite{bravyi00ferm}.
In practical applications, one may pick a variation of the procedure which leads to the best locality for the resulting qubit Hamiltonian.

\section{Sparse and local superfast encoding}\label{sec:sf_encoding}
The Bravyi-Kitaev superfast encoding is a method for mapping sparse and local fermionic Hamiltonians to sparse and local qubit Hamiltonians (up to the enforcement of certain stabilizer constraints). Here we give a version of this encoding which makes sure these stabilizer constraints are themselves local and sparse. Briefly, the BK encoding proceeds as follows. 
Given an $n$-fermion Hamiltonian $H$ we construct a graph $G$ with $n$ vertices (one for each fermionic mode) and place an edge if a term in the Hamiltonian involves both modes. More precisely, one can define vertex and edge operators as
\begin{align}
    V_i = c_{2i-1}c_{2i} \hspace{5em} i \in V(G) = [n],\\
    E_{i,j}=  c_{2i}c_{2j} \hspace{5em} (i,j) \in E(G).
    \label{eq:edge}
\end{align}
The Hamiltonian $H$ can then be recovered as an element of the ``graph algebra" generated by the above operators. We can represent this graph algebra on qubits by placing a qubit on each edge of the graph and defining a map $\mathcal{B}$ taking the $E$ and $V$ operators to Pauli operators which are local with respect to the graph $G$ (see e.g. \cite{setia2019superfast} for an explicit description). However the graph algebra comes with the non-trivial constraint that the product of the edge operators $E_{i,j}$ along any cycle of $G$ is equal to the identity, as is clear from Eq.~\eqref{eq:edge}. For $\mathcal{B}$ to be a proper algebra homomorphism we must thus restrict its image to the subspace where this is true also for the operators $\mathcal{B}(E_{i,j})$. It turns out these constraints all commute and thus form a stabilizer group generated by the products along a cycle basis of $G$.\\

The interesting question is whether these stabilizer generators can be made sparse and local. This corresponds to choosing a cycle basis for the graph $G$ that contains only cycles of constant length, and where each edge only participates in a constant number of basis cycles. It is clear that there are graphs of bounded degree for which no such basis exists. Consider for instance the family of $n$-vertex bounded-degree expander graphs given in \cite{margulis1982explicit}, which have girth $\Omega(\log(n))$, and hence have no cycles of constant length. Moreover the total length of any basis must then be $\Omega(n\log(n))$ which, from a pigeonhole argument, means that there is at least one edge that is present in $\Omega(\log(n))$ basis cycles.\\

Here we give a construction that solves both of these problems, by constructing from the graph $G$ a larger graph $\hat{G}$ that has a cycle basis of short cycles that use every edge only a constant number of times, and has $G$ as an induced subgraph. Note that this means that the graph algebra of $G$ is a subalgebra of the graph algebra of $\hat{G}$ (by considering a subset of the generators), and hence this provides a valid mapping of the original fermionic Hamiltonian to a qubit Hamiltonian. The downside of this construction is that the graph $\hat{G}$ is of size $\Theta(n^2)$, which makes it difficult to use it for NLTS-style arguments where the scale of the system ($n$ vs $n^2$) is important. However the construction might be of use in quantum simulation, and as a starting point for more sophisticated constructions with better parameters. 
\begin{figure}
\begin{tikzpicture}
\node at (-1,2.5) {\text{(a)}};
\node at (8,2.5) {\text{(b)}};
\node (A) at (0,0) {\includegraphics[width=0.15\linewidth]{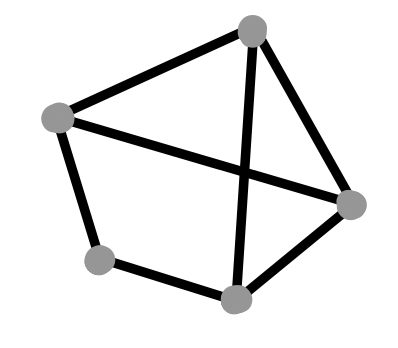}};
\node (B) at (4.5,0)  {\includegraphics[width=0.2\linewidth]{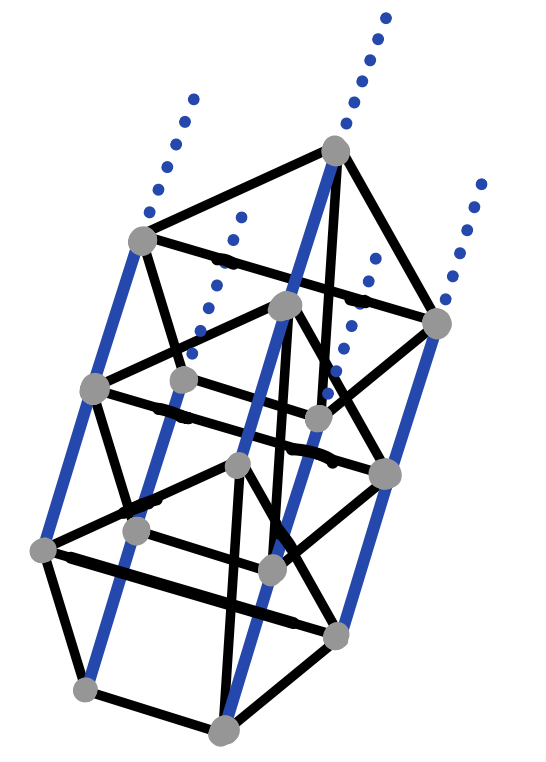}};
\draw [->] (A) to (B);
\node (C) at (9.5,0)  {\includegraphics[width=0.2\linewidth]{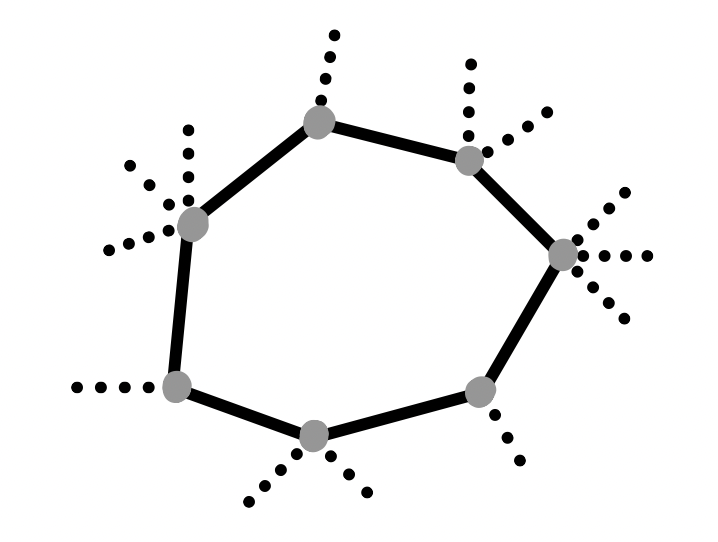}};
\node (D) at (13.5,0)  {\includegraphics[width=0.2\linewidth]{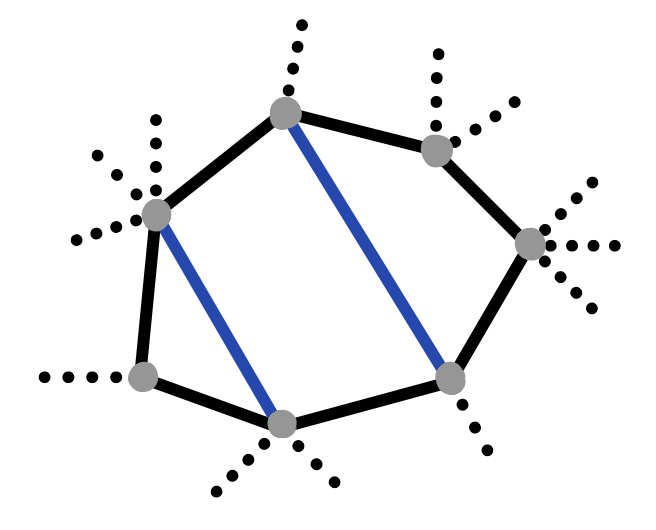}};
\draw [->] (C) to (D);
\end{tikzpicture}
\caption{\textbf{(a)}: Illustration of the stacking process described in the proof of \cref{lem:graph} with a graph on five vertices. The blue edges are the new edges added in between copies of the graph. \textbf{(b)}: Illustration of the sewing process in the proof of \cref{lem:graph}. A single large cycle is shown inside a graph, as well as its sewn up version. The blue edges are again added in the process.}\label{fig:graph_process}
\end{figure}

\begin{lemma}\label{lem:graph}
   Consider a bounded-degree connected graph $G$ on $n$ vertices. There exists a (polynomial time construable) connected graph $\hat{G}$ on $O(n^2)$ vertices with $G$ as an induced subgraph that has a cycle basis consisting of cycles of length at most $4$ which uses no edge in $\hat{G}$ more than $4$ times. 
\end{lemma}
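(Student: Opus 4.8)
\textbf{Proof proposal for \cref{lem:graph}.}

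The plan is to build $\hat G$ from $G$ in two stages, corresponding to the two pictures in \cref{fig:graph_process}. The first stage (``stacking'') reduces the \emph{edge participation number}: I would take $n$ identical copies $G_1,\dots,G_n$ of $G$, stacked in a sequence, and for every edge $(i,j)\in E(G)$ add, between consecutive copies $G_t$ and $G_{t+1}$, the ``rung'' edges $(i_t,i_{t+1})$ and a diagonal so that the corresponding 4-cycle $i_t\,j_t\,j_{t+1}\,i_{t+1}$ becomes a basis cycle. Walking the same physical edge ``one step up the stack'' shows that $E_{i,j}$ in $G_t$ differs from $E_{i,j}$ in $G_{t+1}$ by these short cycles, so across $n$ copies each original edge only needs to be ``explained'' by $O(1)$ short cycles locally; the $n$ copies give us enough room that no edge of $\hat G$ ends up in more than a constant number (I will aim for $\le 4$) of basis cycles. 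The second stage (``sewing'') handles any long cycle that still appears in a cycle basis of the stacked graph (e.g. the long cycles inherited from $G$'s own cycle space, such as the $\Omega(\log n)$-girth expander example in the text): triangulate/quadrangulate the interior of each such cycle by adding chord vertices and edges so that the long cycle is rewritten as a sum of 4-cycles. Doing this for a cycle basis of $G$ (of which there are $O(n)$ cycles, each of length $O(n)$) costs $O(n^2)$ extra vertices and edges, which is where the $O(n^2)$ bound comes from.

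The steps in order: (1) fix a spanning tree $T$ of $G$ and the induced fundamental cycle basis $\{C_e : e\notin T\}$; there are $m-n+1=O(n)$ such cycles. (2) For each fundamental cycle $C_e$ of length $\ell_e=O(n)$, adjoin a ``fan'' or ``zig-zag strip'' of new vertices that subdivides the disk bounded by $C_e$ into quadrilaterals, so $C_e$ becomes a $\ZZ_2$-sum of 4-cycles, each new edge lying in at most $2$ of them; this already makes the cycle space of (the thus-augmented) $G$ generated by 4-cycles, at the cost of $\sum_e O(\ell_e) = O(n^2)$ vertices. (3) Now perform the stacking of $n$ copies to drive down the edge-participation of the original edges, adding the inter-copy 4-cycles as above; verify this keeps degrees bounded, keeps $G$ as an induced subgraph (it sits as $G_1$, say, with no new edges among its vertices beyond those of $G$), and keeps $\hat G$ connected and of size $O(n^2)$. (4) Assemble the final cycle basis of $\hat G$: the 4-cycles from the sewing strips, the inter-copy 4-cycles from the stacking, plus one copy of the short fundamental cycles; check by a rank count (Euler-type: $|E(\hat G)|-|V(\hat G)|+1$) that this collection is exactly a basis and that every edge of $\hat G$ is used at most $4$ times. (5) Conclude that the graph algebra of $G$ embeds in that of $\hat G$ via the subset of $V,E$ generators indexed by $V(G),E(G)$, so the constraints of $\hat G$'s superfast encoding are all length-$\le 4$ products and each qubit (edge) carries $O(1)$ of them.

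The main obstacle I expect is \textbf{simultaneously} controlling the two competing quantities --- cycle length (want $\le 4$) and per-edge cycle multiplicity (want $O(1)$) --- while also keeping $G$ an \emph{induced} subgraph and the degree bounded. Sewing a long cycle naively with a single fan vertex blows up that vertex's degree to $\Omega(\ell_e)$, so the subdivision must be done with a bounded-degree gadget (a ladder/strip of quadrilaterals), and one must check that strips for different fundamental cycles can coexist without creating extra chords among $V(G)$ (which would destroy inducedness) or overloading shared edges. The stacking step is the natural fix for edge-multiplicity but it must be checked that it does not reintroduce long cycles or inflate the size beyond $O(n^2)$. I would handle inducedness by always routing new edges through fresh vertices and never adding an edge between two pre-existing vertices of $G$, and handle the degree bound by making every gadget (the quadrilateral strips and the inter-copy rungs) locally finite by construction; the bookkeeping that the resulting explicit family of 4-cycles is linearly independent and spans $H_1(\hat G;\ZZ_2)$ is then a routine Euler-characteristic count.
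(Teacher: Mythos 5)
Your two ingredients (vertical $4$-cycles from stacking, quadrangulating long basis cycles into short ones) are exactly the ones the paper uses, but the order in which you compose them creates a genuine gap: the edge-multiplicity problem is not fixed by stacking \emph{after} you have sewn all fundamental cycles into a single copy of $G$. In your step (2) every fundamental cycle $C_e$ is quadrangulated inside the same graph, and each quadrangulation necessarily uses the original edges of $C_e$ in its boundary quadrilaterals (and attaches strip edges to the vertices of $C_e$). A fixed tree edge or tree vertex can lie on $\Omega(n)$ fundamental cycles of a spanning tree, so after step (2) that edge already sits in $\Omega(n)$ of your $4$-cycles and that vertex has degree $\Omega(n)$ --- both the multiplicity bound and the bounded-degree claim are lost before stacking ever happens. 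Stacking in step (3) does not repair this: a cycle basis of the stacked graph consists of the vertical $4$-cycles plus \emph{one} layer's worth of cycles decomposing the original cycle space, and the sewing strips (hence all the sewing $4$-cycles) live in whichever single copy you attached them to, so the overloaded edges in that copy remain overloaded. Moreover, if step (3) is read as stacking $n$ copies of the already-augmented graph, the vertex count is $O(n^3)$, not $O(n^2)$.

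The paper's construction resolves precisely this tension by reversing the roles: it takes \emph{one copy of $G$ per basis cycle} (using a minimum-length cycle basis, $|C|=O(n)$ copies, hence $O(n^2)$ vertices), stacks them to get the vertical $4$-cycles, and then sews \emph{each} basis cycle inside \emph{its own} copy. Then the edges of a given copy meet only the $3$- and $4$-cycles of that copy's single sewn cycle plus at most two vertical cycles, and vertical edges meet only vertical cycles, which is how the per-edge bound and bounded degree are obtained simultaneously with cycle length $\le 4$. Your idea of routing the sewing through fresh vertices (rather than chords) to protect inducedness is a nice touch and would combine fine with the one-cycle-per-copy layout, but as written your proposal needs that redistribution step --- ``sew each basis cycle in a different copy'' --- or the multiplicity and degree bounds fail.
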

\begin{proof}
We will explicitly construct $\hat{G}$. First, compute a minimum length cycle basis $C$ for $G$ (for instance through Horton's algorithm, which takes $O(n^4)$ time). Since $G$ has bounded degree and is connected, the cycle basis has $E(\hat{G}) -V(\hat{G})+1 = O(n)$ elements. Order the cycles in $C$ in some arbitrary way. We now construct the graph $\hat{G}$ as follows. For each element of $C$ we make a copy of the graph $G$. We take these graph copies and ``stack" them on top of each other, connecting each vertex in a graph copy to the corresponding vertex in the copies directly above and below (see \cref{fig:graph_process}(a) for an illustration). This creates $E(G) (|C|-1)$ ``vertical" cycles of length $4$. It is easy to see that the vertical cycles form an independent set, since every cycle contains an edge that is not used by any other vertical cycle. Furthermore, the set of cycles $C$ is still an independent set in $\hat{G}$. Furthermore, the union of these two sets is also independent. This union is in fact a basis for $\hat{G}$, which one can see (by direct calculation) that the dimension of the cycle space of $\hat{G}$ (i.e., $E(\hat{G}) -V(\hat{G})+1$) precisely matches the number of vertical cycles plus the dimension of the cycle space of $G$. 

Continuing our construction, consider for each cycle in the set $C$ the associated copy of $G$. In this copy, ``sew" up the cycle by adding edges across the cycle, in the manner illustrated in \cref{fig:graph_process}(b). For each cycle $c$ in $C$ this creates a number of cycles of length $3$ or $4$. Note that we add $\ceil{|c|/2}$ edges to $\hat{G}$. Since the total cycle length of a minimum length cycle basis is $O(n\log(n))$ \cite[Theorem $4.4$]{kavitha2009cycle}, we end up adding at most $O(n\log(n))$ edges. 
This completes the construction of $\hat{G}$. 
Note also that the degree of $\hat{G}$ is at most three higher than the degree of $G$.

We now propose the following basis for the cycle space of the graph $\hat{G}$. We take all vertical cycles, and all the short cycles created by the sewing procedure for each cycle in $C$. And by our earlier argument the union of the vertical cycles and the cycle basis of $G$ was a basis for $\hat{G}$ before the sewing procedure. Since the sewing procedure adds an independent cycle for each edge it adds, the resulting set is a basis for $\hat{G}$.

Clearly every cycle in this set is of length no longer than $4$. Moreover, every edge is used at most a bounded number of times by each basis cycle. For the vertical edges it is clear that their number of uses is bounded by the degree of $G$ (as they only participate in ``vertical" cycles), and the edges in each copy of the graph participate only in the $3$ and $4$ cycles making up the sewed cycle for that copy, as well as at most two vertical cycles. Hence the number of uses of each edge in the basis is also bounded.
\end{proof}

\newcommand{\etalchar}[1]{$^{#1}$}

% \bibliography{fermnlts}
\end{document}